\newtheorem{theorem}{Theorem}
\newtheorem{remark}{Remark}%
\newcommand{\biblist}{\begin{list}{}
{\listparindent 0.0cm \leftmargin 0.50cm \itemindent -0.50 cm
\labelwidth 0 cm \labelsep 0.50 cm
\usecounter{list}}\clubpenalty4000\widowpenalty4000}
\newcommand{\ebiblist}{\end{list}}
\theoremstyle{plain}
\newtheorem{lem}{Lemma}
\title{\bf Generalized Cram\'er's coefficient via  $f$-divergence for contingency tables}
\author[1]{Wataru Urasaki}
\author[1]{Tomoyuki Nakagawa}
\author[1]{Tomotaka Momozaki}
\author[1,2]{Sadao Tomizawa}
\affil[1]{Department of Information Sciences, Tokyo University of Science}
\affil[2]{Department of Information Science, Meisei University}
\date{}
\begin{document}

\clearpage
\pagenumbering{arabic}
\maketitle

\begin{abstract}
Various measures in two-way contingency table analysis have been proposed to express the strength of association between row and column variables in contingency tables.
Tomizawa et al. (2004) proposed more general measures, including Cram\'er's coefficient, using the power-divergence. 
In this paper, we propose measures using the $f$-divergence that has a wider class than the power-divergence.
Unlike statistical hypothesis tests, these measures provide quantification of the association structure in contingency tables.
The contribution of our study is proving that a measure applying a function that satisfies the condition of the $f$-divergence has desirable properties for measuring the strength of association in contingency tables.
With this contribution, we can easily construct a new measure using a divergence that has essential properties for the analyst. For example, we conducted numerical experiments with a measure applying the $\theta$-divergence.
Furthermore, we can give further interpretation of the association between the row and column variables in the contingency table, which could not be obtained with the conventional one.
We also show a relationship between our proposed measures and the correlation coefficient in the bivariate normal distribution of latent variables in the contingency tables.
\end{abstract}

\medskip

{\bf Keywords}: Contingency table, Cram\'er's coefficient, $f$-divergence, Independence, Measure of association

\medskip

{\bf Mathematics Subject Classification}: 62H17, 62H20

\section{Introduction}\label{sec1}
Contingency tables and their analysis are important for various fields, such as medicine, psychology, education, and social science. 
Typically, contingency tables are used to evaluate whether row and column variables are statistically independent.
If the independence of the two variables is rejected, for example, through Pearson's chi-squared test, or if they are clearly related, then we are interested in their strength of association.
Many coefficients have been proposed to measure the strength of association between the two variables, namely, to measure the degree of departure from independence. 
Pearson's coefficient $\phi^2$ of mean square contingency and $P$ of contingency and Tschuprow's coefficient $T$ (\citealp{Tschuprow1926grundbegriffe,Tschuprow1939principles}) serve as prime examples (see, e.g., \citealp{bishop2007discrete, everitt1992analysis, agresti2003categorical}). 
These measures can represent the strength of the association within an interval of $0$--$1$, where the value $0$ indicates the independence of the contingency table.
However, the problem with $\phi^2$ is that its value does not attain $1$ despite that the contingency table has a complete association structure (i.e., maximum departure from independence). 
Similarly, $P$ and $T$ do not always attain the value of $1$ depending on the number of rows and columns in the table.
To address this issue, \cite{cramer1946mathematical} proposed Cram\'er's coefficient $V^2$, which can reach the value of $1$ if the contingency table has a complete association structure for all rows and columns.
Specifically, $V^2$ indicates the strength of association in the contingency table as $0 \leq V^2 \leq 1$, with the value of $0$ identifying the independent structure and the value of $1$ identifying the complete association structure.

\cite{renyi1961measures} introduced a class of measures of a divergence of two distributions. 
Recent studies have linked contingency tables and the divergence. \cite{tomizawa2004generalization} proposed measures $V^2_{t(\lambda)}$ ($t=1, 2, 3$) based on the power-divergence with parameter $\lambda \geq 0$.
This study extended the measure that had been limited to examining $V^2$ ($\lambda = 1$) and showed to be the members of a single-parameter family, including a measure based on the KL-divergence ($\lambda = 0$).
(For more details of the power-divergence, see \cite{cressie1984multinomial}, and \cite{read1988goodness}.)
Furthermore, the $f$-divergence is introduced by \cite{ali1966general} and \cite{csiszar1963ene} as a useful generalization of the relative entropy, which retains some of its major properties. 
It is also called the $\phi$-divergence. 
In contingency table analysis, a  considerable amount of literature has been published on modeling using the $f$-divergence (e.g., \citealp{kateri1994f, kateri1997asymmetry, kateri2007class, fujisawa2020asymmetry, tahata2022advances, yoshimoto2019quasi}). 
Many studies on goodness-of-fit tests using the $f$-divergence have been conducted in the literature, showing the usefulness of the $f$-divergence \citep[e.g.,][etc.]{pardo2018statistical, felipe2014phi, felipe2018statistical}.
However, discussions on the measures using the $f$-divergence are limited.

In this paper, we propose a wider class of measures than the conventional one via the $f$-divergence.
This study's contribution is proving that a measure applying a function $f(x)$ that satisfies the condition of the $f$-divergence has desirable properties for measuring the strength of association in contingency tables.
This contribution allows us to easily construct a new measure using a divergence that has desirable properties for the analyst. For example, we conduct numerical experiments with a measure applying the $\theta$-divergence.
Furthermore, we can give further interpretation of the association between rows and columns in the contingency table, which could not be obtained with the conventional one.

The rest of this paper is organized as follows. 
Section \ref{sec2} proposes new measures to express the strength of the association between the row and column variables in two-way contingency tables.
Furthermore, the section shows that the proposed measures have desirable properties for measuring the strength of association.
Section \ref{sec3} presents the relationship between the measures and correlation coefficient in the bivariate normal distribution of the latent variables in the contingency tables. Section \ref{sec4} demonstrates its simulation experiment.
Section \ref{sec5} presents the approximate confidence intervals of the proposed measures. 
Section \ref{sec6} presents the analysis examples of the proposed measures applying the power-divergence and the $\theta$-divergence with actual data.
Finally, Section $7$ provides some concluding remarks.

\section{Generalized measure}\label{sec2}
We consider measures of association using the $f$-divergence for an $r \times c$ contingency table. 
For the $r \times c$ contingency table, let $p_{ij}$ denote the probability that an observation will fall in the $i$th row and $j$th column of the table $(i = 1, \dots, r; j=1, \dots, c)$.
Moreover, let $p_{i\cdot}$ and $p_{\cdot j}$ be $p_{i \cdot} = \sum^c_{t=1} p_{it} $ and $p_{\cdot j} = \sum^r_{s=1} p_{sj}$.
Hereinafter, we assume that $\{p_{i\cdot} \ne 0,$ $p_{\cdot j} \geq 0\}$ when $r \leq c$ and $\{p_{i\cdot} \geq 0,$ $p_{\cdot j} \ne 0\}$ when $r > c$. 

In \cite{Sason2016divergence}, the $f$-divergence from $P$ to $Q$ is defined as 
$I_f(P;Q) = \int f(dP/dQ) dQ$, where $f$ is a convex function and $P \ll Q$. 
For the $r \times c$ contingency table, $P$ and $Q$ are given as discrete distributions $\{p_{ij}\}$ and $\{q_{ij}\}$. Accordingly, we have $dP/dQ = \{p_{ij}/q_{ij}\}$. 
Thus, the $f$-divergence from $\{p_{ij}\}$ to $\{q_{ij}\}$ is given as
\begin{align*}
I_f(P;Q) = I_f(\{p_{ij}\};\{q_{ij}\}) &= \sum_{i}\sum_{j} q_{ij} f\left(\frac{p_{ij}}{ q_{ij}}  \right), 
\end{align*}
where $f(x)$ is a once-differentiable and strictly convex function on $(0, +\infty)$ with $f(1) = 0$, $ \lim_{x \to 0}f(x) = 0$, $0f(0/0) = 0$, and $0f(a/0) = a\lim_{x \to \infty}f(x) / x$ (see \citealp{csiszar2004information}). 
By choosing function $f$, many important divergences, such as the KL-divergence ($f(x) = x\log x$), the Pearson's divergence ($f(x) = x^2-x$),  the power-divergence ($f(x)=(x^{\lambda+1}-x)/\lambda(\lambda+1)$), and the $\theta$-divergence ($f(x) = (x-1)^2/(\theta x + 1 - \theta) + (x-1)/(1 - \theta)$ ), are included in special cases of the $f$-divergence \citep[see,][e.g.,]{Sason2016divergence, ichimori2013inequalities}. 
Furthermore, the $f$-divergence is one of the monotone and regular divergences. 
The class of monotone and regular divergences is introduced in \cite{cencov2000statistical} and studied in \cite{corcuera1998characterization} as a wide class of invariant divergences with respect to Markov embeddings. 
The class of monotone and regular divergence is often used as the measures of goodness of prediction \citep[see][etc.]{gkisser1993predictive, corcuera1999relationship, corcuera1999generalized}. 
Studying these measures aims to obtain a quantitative measure of how well a row or column variable predicts the other variable. 
Therefore, we consider that the measures using the $f$-divergence are appropriate for measuring the association and a natural generalization of that of \cite{tomizawa2004generalization}. 

Measures that present the strength of association between row and column variables are proposed in three cases: 
(I) When the row and column variables are response and explanatory variables, respectively 
(II) When they are explanatory and response variables, respectively
(III) When response and explanatory variables are undefined 
Further, we define measures for the asymmetric situation (in the case of (I) and (II)) and for the symmetric situation (in the case of (III)).

The following are the three properties that should be possessed by the measures:
(i) The measures are contained within an interval (e.g., from $0$ to $1$).
(ii) When the measure is minimal, the row and column variables are statistically independent.
(iii) When the measure is maximal, the categories of one variable can be identified from the other.
Conventional measures satisfy all of these properties.
In the remainder of this section, we prove that the proposed measures also satisfy these three properties.

\subsection{Case I}
For a asymmetric situation wherein the column variable is the explanatory variable and the row variable is the response variable, we propose the following measure that presents the strength of association between the row and column variables by
\begin{align*}
V^2_{1(f)} &= \frac{I_f(\{p_{ij}\};\{p_{i \cdot}p_{\cdot j}\})}{K_{1(f)}},
\end{align*}
where
\begin{align*}
I_f(\{p_{ij}\};\{p_{i \cdot}p_{\cdot j}\}) &= \sum^r_{i=1} \sum^c_{j=1}p_{i \cdot}p_{\cdot j} f\left(\frac{p_{ij}}{p_{i \cdot}p_{\cdot j}}  \right), \\
K_{1(f)} &= \sum^r_{i=1}p^2_{i \cdot} f\left( \frac{1}{p_{i \cdot}} \right).
\end{align*}
Then, the following theorem for the measure $V^2_{1(f)}$ is obtained.

\begin{theorem}\label{thm1}
For each convex function $f$, 
\begin{enumerate}[label = (\roman*)]
\item $0 \leq V^2_{1(f)} \leq 1$. 
\item $V^2_{1(f)} = 0$ if and only if a structure of null association exists in the table $($i.e., $\{p_{ij} = p_{i \cdot}p_{\cdot j}\})$.
\item $V^2_{1(f)} = 1$ if and only if a structure of complete association exists. For each column $j$ $(j = 1, 2, \dots , c)$, $i_j$ uniquely exists such that $p_{i_j, j} > 0$ and $p_{ij} = 0$ for all other $i(\neq i_j)$ $($assuming $p_{i\cdot} > 0$ for all i$)$.
\end{enumerate}
\end{theorem}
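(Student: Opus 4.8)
The plan is to recast the divergence in terms of the conditional row distributions and then handle each row separately. Writing $a_{ij} = p_{ij}/p_{\cdot j}$ for every column with $p_{\cdot j} > 0$ (so that $\sum_i a_{ij} = 1$ and $\sum_j p_{\cdot j}\, a_{ij} = p_{i\cdot}$), the identity $p_{ij}/(p_{i\cdot}p_{\cdot j}) = a_{ij}/p_{i\cdot}$ lets me rewrite the numerator as
\begin{align*}
I_f(\{p_{ij}\};\{p_{i\cdot}p_{\cdot j}\}) = \sum_{i=1}^r p_{i\cdot}\left[\sum_{j=1}^c p_{\cdot j}\, f\!\left(\frac{a_{ij}}{p_{i\cdot}}\right)\right].
\end{align*}
Columns with $p_{\cdot j} = 0$ contribute nothing by the stated conventions on $f$, so they may be dropped throughout.

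First I would dispose of the lower half of (i) and of part (ii). Since $\{p_{i\cdot}p_{\cdot j}\}$ is a probability distribution, Jensen's inequality applied to the convex $f$ gives $I_f \geq f(\sum_{ij} p_{ij}) = f(1) = 0$, which is just the non-negativity of an $f$-divergence; strict convexity forces equality exactly when $p_{ij}/(p_{i\cdot}p_{\cdot j})$ is constant, and summing shows the constant is $1$, i.e.\ $p_{ij} = p_{i\cdot}p_{\cdot j}$. I would also record that the denominator is positive: each $p_{k\cdot}f(1/p_{k\cdot})$ is the $f$-divergence of the point mass at row $k$ from $\{p_{i\cdot}\}$, hence non-negative, and it is strictly positive whenever $p_{k\cdot} < 1$ (the unique zero of $f$ is at $1$), so $K_{1(f)} > 0$ under the standing assumptions.

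The crux is the upper bound $I_f \leq K_{1(f)}$, which I would prove row by row. The key observation is that, because $a_{ij} \in [0,1]$, the argument $a_{ij}/p_{i\cdot}$ is the convex combination $a_{ij}\cdot(1/p_{i\cdot}) + (1-a_{ij})\cdot 0$; convexity of $f$ together with $f(0) = 0$ then yields $f(a_{ij}/p_{i\cdot}) \leq a_{ij}\, f(1/p_{i\cdot})$. Summing against the weights $p_{\cdot j}$ and using $\sum_j p_{\cdot j}\, a_{ij} = p_{i\cdot}$ gives the per-row bound $\sum_j p_{\cdot j}\, f(a_{ij}/p_{i\cdot}) \leq p_{i\cdot} f(1/p_{i\cdot})$, and multiplying by $p_{i\cdot}$ and summing over $i$ produces exactly $I_f \leq \sum_i p_{i\cdot}^2 f(1/p_{i\cdot}) = K_{1(f)}$, establishing $V^2_{1(f)} \leq 1$ and completing (i).

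Finally, for (iii) I would track the equality case. Equality in $I_f \leq K_{1(f)}$ forces the per-row bound to be tight for every row (all $p_{i\cdot} > 0$), and strict convexity makes the elementary inequality $f(a_{ij}/p_{i\cdot}) \leq a_{ij} f(1/p_{i\cdot})$ an equality only when the convex combination is trivial, i.e.\ $a_{ij} \in \{0,1\}$, equivalently $p_{ij} \in \{0, p_{\cdot j}\}$. Thus for each column $j$ with $p_{\cdot j} > 0$ exactly one row $i_j$ carries all its mass ($p_{i_j,j} = p_{\cdot j}$, the others zero), which is the asserted complete-association structure; conversely, direct substitution of that structure collapses the double sum to $\sum_i p_{i\cdot}^2 f(1/p_{i\cdot}) = K_{1(f)}$, giving $V^2_{1(f)} = 1$. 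The main obstacle is isolating the right convexity step for the upper bound—the reformulation through the conditional probabilities $a_{ij}$ and the recognition of $a_{ij}/p_{i\cdot}$ as a convex combination anchored at $0$ and $1/p_{i\cdot}$—after which both the bound and its equality characterization fall out of (strict) convexity.
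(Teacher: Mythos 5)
Your proof is correct and follows essentially the same route as the paper: the Jensen step for the lower bound and part (ii) is identical, and your secant-line inequality $f(a_{ij}/p_{i\cdot}) \le a_{ij}\, f(1/p_{i\cdot})$, together with its equality case $a_{ij}\in\{0,1\}$ under strict convexity, is exactly the paper's Lemma A.1 (that $g(x)=f(x)/x$ is strictly increasing) applied to $x = p_{ij}/(p_{i\cdot}p_{\cdot j}) \le 1/p_{i\cdot}$. The only cosmetic difference is that you organize the upper bound row by row through the conditional probabilities $a_{ij}$ rather than termwise through $g$.
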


The proof of Theorem \ref{thm1} is provided in the Appendix.
Similar to the interpretation of measure $V^2_{1(\lambda)}$, $V^2_{1(f)}$ indicates the degree to which the prediction of the row category of an individual may be improved if knowledge regarding the column category of the individual exists. 
In this sense, $V^2_{1(f)}$ shows the strength of association between the row and column variables. 
The examples of the $f$-divergence are given below. 
When $f(x) = x \log x$, $I_f(\{p_{ij}\};\{p_{i \cdot}p_{\cdot j}\})$ is identical to the KL-divergence and $V^2_{1(f)}$ represented by
\begin{align*}
V_{KL} &= \frac{\displaystyle \sum^r_{i=1} \sum^c_{j=1}p_{ij}\log \left(\frac{p_{ij}}{p_{i \cdot}p_{\cdot j}}  \right)} {- \displaystyle \sum^r_{i=1} p_{i\cdot}\log p_{i\cdot}} 
\end{align*}
and $V_{KL}$ is identical to the Thile's uncertainty coefficient $U$ (see, \citealp{theil1970estimation}).
When $f(x) = x^2-x$, the Pearson's divergence is derived, and $V^2_{1(f)}$ is identical to Cram\'er's coefficient $V^2$ with $r \leq c$, and $f(x)=(x^{\lambda+1}-x)/\lambda(\lambda+1)$, $V^2_{1(f)}$ is identical to the power-divergence-type measure 
\begin{align*}
V^2_{1(\lambda)} &= \frac{\displaystyle \sum^r_{i=1} \sum^c_{j=1}p_{ij} \left[ \left(\frac{p_{ij}}{p_{i \cdot}p_{\cdot j}}  \right)^{\lambda} - 1 \right]} {\displaystyle \sum^r_{i=1} p^{1-\lambda}_{i\cdot} - 1}. 
\end{align*}
Further, in the case of $f(x) = (x-1)^2/(\theta x + 1 - \theta) + (x-1)/(1 - \theta)$ for $0 \leq \theta < 1$, $I_f(\{p_{ij}\};\{p_{i \cdot}p_{\cdot j}\})$ is identical to the $\theta$-divergence and $V^2_{1(f)}$ represented by the $\theta$-divergence-type measure
\begin{align*}
V^2_{1(\theta)}  &= \frac{\displaystyle \sum^r_{i=1} \sum^c_{j=1}\frac{(p_{ij}-p_{i \cdot}p_{\cdot j})^2 }{\theta p_{ij} + (1 - \theta)p_{i \cdot}p_{\cdot j}}} {\displaystyle \sum^r_{i=1} \frac{p_{i\cdot}(1-p_{i\cdot})}{(1-\theta) \left( \theta + (1-\theta)p_{i\cdot} \right) } }. 
\end{align*}
Measure $V^2_{1(\theta)}$, such as $V^2_{1(\lambda)}$, is also a single measure and one of the generalizations of $V^2$, which agrees with $V^2$ at $\theta = 0$. The numerator coincides with the triangular discrimination $\Delta$ at $\theta = 0.5$ (see, \citealp{dragomir2000new, topsoe2000some}).
Unlike the power-divergence, the $\theta$-divergence can measure departures from independence similar to the Euclidean distance, especially in the case of the triangular discrimination $\Delta$, which can measure symmetrical distances of $\{p_{ij}\}$and $\{p_{i \cdot}p_{\cdot j}\}$.
In the numerical experiments discussed in Sections $4$ and $6$, we treat the $\theta$-divergence-type measure as the example of a new single-parameter measure that can be considered by extending $V^2$ and compare it with the conventional one.
Moreover, analysis corresponding to various contingency tables can be performed by changing the function.

\subsection{Case II}
For the asymmetric situation wherein the row and column variables are the explanatory and response variables, respectively, we propose the following measure, which presents the strength of association between the row and column variables:
\begin{align*}
V^2_{2(f)} &= \frac{I_f(\{p_{ij}\};\{p_{i \cdot}p_{\cdot j}\})}{K_{2(f)}},
\end{align*}
where
\begin{align*}
K_{2(f)} &= \sum^c_{j=1}p^2_{\cdot j} f\left( \frac{1}{p_{\cdot j}} \right).
\end{align*}
Therefore, the following theorem is obtained for measure $V^2_{2(f)}$.
\begin{theorem}\label{thm2}
For each convex function $f$, 
\begin{enumerate}[label = (\roman*)]
\item $0 \leq V^2_{2(f)} \leq 1$. 
\item $V^2_{2(f)} = 0$ if and only if a structure of null association exists in the table $($i.e., $\{p_{ij} = p_{i \cdot}p_{\cdot j}\})$.
\item $V^2_{2(f)} = 1$ if and only if a structure of complete association exists; that is, for each row $i$ $(i = 1, 2, \ldots, r)$, $j_i$ uniquely exists such that $p_{i, j_i} > 0$ and $p_{ij} = 0$ for all other $j(\neq j_i)$ $($ assuming $p_{\cdot j} > 0$ for all j$)$.
\end{enumerate}
\end{theorem}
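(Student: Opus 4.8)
The plan is to obtain Theorem \ref{thm2} directly from Theorem \ref{thm1} by exploiting the transpose symmetry of the construction, and, for completeness, to indicate the self-contained argument that mirrors Case I with the roles of rows and columns interchanged.

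First I would observe that the numerator $I_f(\{p_{ij}\};\{p_{i\cdot}p_{\cdot j}\})$ is literally the same in $V^2_{1(f)}$ and $V^2_{2(f)}$ and is invariant under transposing the table: writing $\tilde p_{ab} = p_{ba}$ for the $c \times r$ transpose, one has $\tilde p_{a\cdot} = p_{\cdot a}$ and $\tilde p_{\cdot b} = p_{b\cdot}$, so the double sum defining $I_f$ is merely reindexed. The only difference between the two measures is the normalizer, and $K_{2(f)} = \sum_j p_{\cdot j}^2 f(1/p_{\cdot j})$ for the original table is exactly $K_{1(f)}$ for the transpose. Hence $V^2_{2(f)}$ of $\{p_{ij}\}$ equals $V^2_{1(f)}$ of $\{\tilde p_{ab}\}$, and all three conclusions follow by applying Theorem \ref{thm1} to the transposed table, the characterization in (iii) turning into the per-row statement because ``columns of the transpose'' are ``rows of the original.'' The one point requiring care is checking that the marginal-positivity hypotheses transfer: the $r\le c$ / $r>c$ dichotomy governing which marginals are assumed nonzero reverses under transposition, so I would verify that the assumption in force for $\{p_{ij}\}$ is precisely the one Theorem \ref{thm1} requires for $\{\tilde p_{ab}\}$, in particular that $p_{\cdot j}>0$ for all $j$, which is what makes $K_{2(f)}$ well defined and is the hypothesis invoked in part (iii).

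For a self-contained proof I would retrace the Case I argument verbatim with rows and columns swapped. For (i)--(ii), nonnegativity of the numerator is Jensen's inequality applied to the convex $f$, and strict convexity forces equality precisely when $p_{ij}/(p_{i\cdot}p_{\cdot j})$ is constant, i.e.\ $p_{ij}=p_{i\cdot}p_{\cdot j}$; positivity of $K_{2(f)}$ follows by recognizing it as the $f$-divergence of the $c\times c$ diagonal table carrying mass $p_{\cdot j}$ in cell $(j,j)$, which is strictly positive unless a single column carries all the mass. The heart is the upper bound $I_f \le K_{2(f)}$: grouping the numerator by columns gives $I_f = \sum_j p_{\cdot j}\sum_i p_{i\cdot} f(w_{ij})$ with $w_{ij} = p_{ij}/(p_{i\cdot}p_{\cdot j})$, and the column-marginal identity $\sum_i p_{i\cdot} w_{ij} = 1$ together with the bound $w_{ij}\le 1/p_{\cdot j}$ (from $p_{ij}\le p_{i\cdot}$) lets me write $w_{ij}$ as a convex combination of $0$ and $1/p_{\cdot j}$ and apply convexity cellwise, $f(w_{ij}) \le p_{\cdot j}\,w_{ij}\,f(1/p_{\cdot j})$. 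Summing and using $\sum_i p_{i\cdot}w_{ij}=1$ collapses the inner sum to $p_{\cdot j}f(1/p_{\cdot j})$, whence $I_f \le \sum_j p_{\cdot j}^2 f(1/p_{\cdot j}) = K_{2(f)}$, i.e.\ $V^2_{2(f)}\le 1$.

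For (iii) I would read off the equality case of the cellwise convexity bound: by strict convexity, $f(w_{ij}) = p_{\cdot j}w_{ij}f(1/p_{\cdot j})$ forces $w_{ij}\in\{0,\,1/p_{\cdot j}\}$, i.e.\ for each cell either $p_{ij}=0$ or $p_{ij}=p_{i\cdot}$; since $\sum_j p_{ij}=p_{i\cdot}>0$, exactly one column $j_i$ per row carries the whole row mass, which is the asserted complete-association structure. I expect the main obstacle to be this upper-bound equality analysis rather than the bound itself: one must handle the boundary value $f(0)$ and the endpoint $w_{ij}=1/p_{\cdot j}$ carefully (using $\lim_{x\to0}f(x)=0$ and strict convexity) so that the ``only if'' direction identifies exactly the stated structure and no degenerate tables slip through. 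The transpose route sidesteps these computations entirely, provided the hypothesis bookkeeping of the second paragraph is carried out correctly.
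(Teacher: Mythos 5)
Your proposal is correct and matches the paper's route: the paper proves Theorem \ref{thm2} only by remarking that it is ``obtained in a similar manner to the proof of Theorem \ref{thm1}'' and that $V^2_{2(f)}$ equals $V^2_{1(f)}$ after interchanging the row and column variables, which is exactly your transpose reduction. Your self-contained variant is the same argument as the paper's proof of Theorem \ref{thm1} with the roles of the marginals swapped; the only cosmetic difference is that you derive the cellwise bound $f(w_{ij}) \le p_{\cdot j}\, w_{ij}\, f(1/p_{\cdot j})$ directly from convexity and $f(0)=0$, whereas the paper packages the equivalent fact as its Lemma A.1 (strict monotonicity of $g(x)=f(x)/x$).
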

The proof of Theorem \ref{thm2} is obtained in a similar manner to the proof of Theorem \ref{thm1}.
$V^2_{2(f)}$ coincides with the value of $V^2_{1(f)}$ when the row and column variables are interchanged in the table, and $V^2_{2(f)}$ has no special characteristics compared to $V^2_{1(f)}$.
However, it is proposed because of its importance in Case III.

\subsection{Case III}
In an $r \times c$ contingency table wherein explanatory and response variables are undefined, using $V^2_{1(f)}$ and $V^2_{2(f)}$ is inappropriate if we are interested in determining the degree to what knowledge about the value of one variable can help us predict the value of the other variable. For this asymmetric situation, we propose the following measure that combines the ideas of both $V^2_{1(f)}$ and $V^2_{2(f)}$:
\begin{align*}
V^2_{3(f)} &= h^{-1} \left( \left(  w_1h\left( V^2_{1(f)} \right) + w_2h\left(V^2_{2(f)} \right)  \right) \right),
\end{align*}
where $h$ is the monotonic function and $w_1 + w_2 = 1$ $(w_1, w_2 \geq 0)$. Then, the following theorem is attained for measure $V^2_{3(f)}$.
\begin{theorem}\label{thm3}
For each convex function $f$, 
\begin{enumerate}[label = (\roman*)]
\item $0 \leq V^2_{3(f)} \leq 1$. 
\item $V^2_{3(f)} = 0$ if and only if a structure of null association exists in the table $($i.e., $\{p_{ij} = p_{i \cdot}p_{\cdot j}\})$.
\item $V^2_{3(f)} = 1$ if and only if a structure of complete association exists; that is, at most one non zero probability appears in each row or each column (assuming all marginal probabilities are non zero).
\end{enumerate}
\end{theorem}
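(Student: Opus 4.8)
The plan is to read off $V^2_{3(f)}$ as a weighted quasi-arithmetic (Kolmogorov--Nagumo) mean of the Case~I and Case~II measures, and to reduce all three assertions to (a)~the internality of such a mean together with (b)~the characterizations of $V^2_{1(f)}$ and $V^2_{2(f)}$ already furnished by Theorems~\ref{thm1} and~\ref{thm2}. Throughout I take $h$ to be continuous and strictly monotone on $[0,1]$ (so that $h^{-1}$ is well defined on the relevant range), and I set $a = V^2_{1(f)}$ and $b = V^2_{2(f)}$, both of which lie in $[0,1]$ by Theorems~\ref{thm1}(i) and~\ref{thm2}(i).

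First I would establish the internality bound $\min(a,b) \le V^2_{3(f)} \le \max(a,b)$. Assuming $h$ increasing (the decreasing case is identical after reversing inequalities, which cancel against the reversal in $h^{-1}$), monotonicity gives $h(\min(a,b)) \le w_1 h(a) + w_2 h(b) \le h(\max(a,b))$ since the middle term is a convex combination; applying the increasing map $h^{-1}$ preserves the ordering. Continuity of $h$ guarantees, via the intermediate value theorem, that the convex combination lies in $h([0,1])$, so $h^{-1}$ is applicable. Because $a,b \in [0,1]$, this yields $0 \le V^2_{3(f)} \le 1$, which is assertion~(i).

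For the boundary cases I would sharpen this to an equality analysis. When both weights are strictly positive, the convex combination $w_1 h(a) + w_2 h(b)$ equals the endpoint value $h(0)$ (respectively $h(1)$) if and only if $a = b = 0$ (respectively $a = b = 1$): indeed $h(0) \le h(v) \le h(1)$ for every $v \in [0,1]$, and a strict convex combination of such values meets an endpoint only when every term does, after which injectivity of $h$ forces the arguments to that endpoint. If instead one weight vanishes, $V^2_{3(f)}$ collapses to the surviving measure and the same conclusion holds with only one of Theorems~\ref{thm1},~\ref{thm2} invoked. Hence $V^2_{3(f)} = 0 \iff a = 0 \text{ and } b = 0$, each of which is equivalent to the null-association structure $\{p_{ij} = p_{i \cdot}p_{\cdot j}\}$ by Theorems~\ref{thm1}(ii) and~\ref{thm2}(ii), giving assertion~(ii). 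Likewise $V^2_{3(f)} = 1 \iff a = 1 \text{ and } b = 1$; by Theorems~\ref{thm1}(iii) and~\ref{thm2}(iii) this says that each column possesses a unique row with positive probability and each row possesses a unique column with positive probability, which is the complete-association structure of assertion~(iii).

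The hard part will be the equality analysis for the quasi-arithmetic mean together with the bookkeeping for the degenerate weight choices $w_i \in \{0,1\}$, where the mean degenerates and a single factor controls the boundary; everything else is a transcription of Theorems~\ref{thm1} and~\ref{thm2}. A secondary point worth spelling out is that the simultaneous requirement $a = b = 1$ forces the positive entries of $\{p_{ij}\}$ into a one-to-one row--column correspondence, which is exactly the symmetric notion of complete association appropriate when neither variable is singled out as the response; this is what couples the two asymmetric characterizations into the single statement of~(iii).
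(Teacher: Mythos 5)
Your proof follows the same route as the paper's: internality of the weighted quasi-arithmetic mean gives (i) via Theorems~\ref{thm1} and~\ref{thm2}, and the boundary cases reduce to $V^2_{1(f)} = V^2_{2(f)} = 0$ (resp.\ $=1$), which Theorems~\ref{thm1} and~\ref{thm2} translate into the null- and complete-association structures. If anything, your equality analysis (a strict convex combination of values in $[h(0),h(1)]$ attains an endpoint only when every term does, followed by injectivity of $h$) is spelled out more carefully than the paper's one-line appeal to monotonicity, and your remark on the degenerate weights $w_i \in \{0,1\}$ addresses a case the paper silently ignores.
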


The proof of Theorem \ref{thm3} is provided in the Appendix.
We can show that, if $h(u) = \log u$ and $w_1=w_2$, $V^2_{3(f)}$ is denoted by 
\begin{align*}
V^2_{G(f)} &= \frac{I_f(\{p_{ij}\};\{p_{I \cdot}p_{\cdot j}\})}{\sqrt{K_{1(f)}K_{2(f)}}} = \sqrt{V^2_{1(f)} V^2_{2(f)}},
\end{align*}
and if $h(u) = 1/u$ and $w_1=w_2$, $V^2_{3(f)}$ is represented by 
\begin{align*}
V^2_{H(f)} &= \frac{2I_f(\{p_{ij}\};\{p_{i \cdot}p_{\cdot j}\})}{K_{1(f)} + K_{2(f)}} = \frac{2V^2_{1(f)} V^2_{2(f)}}{V^2_{1(f)} + V^2_{2(f)}}.
\end{align*}
Notably, $V^2_{G(f)}$ and $V^2_{H(f)}$ are the geometric mean and harmonic mean of $V^2_{1(f)}$ and $V^2_{2(f)}$, respectively. We confirm that, when $f(x) = x^2-x$ with $r=c$, $V^2_{3(f)}$ is identical to Cram\'er's coefficient $V^2$. Conversely, for $f(x)=(x^{\lambda+1}-x)/\lambda(\lambda+1)$, $V^2_{3(f)}$ is consistent with Miyamoto's measure $G^2_{(\lambda)}$ (\citealp{miyamoto2007generalized}). 

For an $r \times r$ contingency table with the same row and column classifications, $V^2_{3(f)}=1$ if and only if the main diagonal cell probabilities in the $r \times r$ table are nonzero and the off-diagonal cell probabilities are all zero after interchanging some row and column categories. 
Therefore, all observations concentrate on the main diagonal cells.
While predicting the values of categories of an individual, $V^2_{3(f)}$ would specify the degree to which the prediction could be improved if knowledge about the value of one variable exists. In this sense, $V^2_{3(f)}$ also indicates the strength of association between the row and column variables. 
If only the marginal distributions $\{p_{i\cdot}\}$ and $\{p_{\cdot j}\}$ are known, we consider predicting the values of the individual row and column categories in terms of probabilities with independent structures.
\begin{theorem}\label{thm4}
For any fixed convex functions $f$ and monotonic functions $h$,
\begin{enumerate}
\item $\min(V^2_{1(f)}, V^2_{2(f)}) \leq V^2_{3(f)} \leq \max(V^2_{1(f)}, V^2_{2(f)})$,  
\item $\min(V^2_{1(f)}, V^2_{2(f)}) \leq V^2_{H(f)} \leq V^2_{G(f)} \leq \max(V^2_{1(f)}, V^2_{2(f)})$. 
\end{enumerate}
\end{theorem}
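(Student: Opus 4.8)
The plan is to read both statements as elementary facts about the two numbers $a := V^2_{1(f)}$ and $b := V^2_{2(f)}$, which satisfy $a, b \in [0,1]$ by parts (i) of Theorems \ref{thm1} and \ref{thm2}. For statement (1) I would recognize $V^2_{3(f)} = h^{-1}\!\left(w_1 h(a) + w_2 h(b)\right)$ as a weighted quasi-arithmetic (Kolmogorov--Nagumo) mean of $a$ and $b$, and prove the general principle that such a mean always lies between $\min(a,b)$ and $\max(a,b)$. Since $w_1 + w_2 = 1$ with $w_1, w_2 \geq 0$, the quantity $w_1 h(a) + w_2 h(b)$ is a convex combination of $h(a)$ and $h(b)$, so
$$\min(h(a), h(b)) \leq w_1 h(a) + w_2 h(b) \leq \max(h(a), h(b)).$$
The only point requiring care is the direction of monotonicity of $h$: if $h$ is increasing then $\min(h(a),h(b)) = h(\min(a,b))$ and $\max(h(a),h(b)) = h(\max(a,b))$, whereas if $h$ is decreasing these two identifications swap. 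In either case, applying $h^{-1}$ — which is monotone in the same direction as $h$ — to the displayed chain yields $\min(a,b) \leq V^2_{3(f)} \leq \max(a,b)$, because an increasing map preserves the inequalities while a decreasing map reverses all of them simultaneously, recovering the same conclusion.

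For statement (2) I would first dispatch the two outer inequalities by invoking statement (1) in the two special cases that define $V^2_{H(f)}$ and $V^2_{G(f)}$: taking $h(u) = 1/u$ with $w_1 = w_2 = 1/2$ gives $\min(a,b) \leq V^2_{H(f)} \leq \max(a,b)$, and taking $h(u) = \log u$ with $w_1 = w_2 = 1/2$ gives $\min(a,b) \leq V^2_{G(f)} \leq \max(a,b)$. In particular this supplies the two end links $\min(a,b) \leq V^2_{H(f)}$ and $V^2_{G(f)} \leq \max(a,b)$ of the desired chain. The remaining and only nontrivial link is the harmonic--geometric mean inequality $V^2_{H(f)} \leq V^2_{G(f)}$, i.e. $\tfrac{2ab}{a+b} \leq \sqrt{ab}$, which I would prove directly: both sides are nonnegative, so squaring reduces it to $4ab \leq (a+b)^2$, and this is just $(a-b)^2 \geq 0$.

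The only real obstacle here is bookkeeping rather than depth. I must handle the decreasing-$h$ case in statement (1) carefully so that the $\min$/$\max$ identifications and the final inequality reversal line up correctly, and I must guard the boundary cases where $a$ or $b$ equals $0$, since there the representations through $h(u)=1/u$ or $h(u)=\log u$ are singular. For statement (2) it is therefore cleaner to argue on the explicit means: when $ab = 0$ one checks immediately that $\min(a,b) = V^2_{H(f)} = V^2_{G(f)} = 0 \leq \max(a,b)$, so the chain holds trivially, and when $ab > 0$ the algebra above applies verbatim. With these two caveats addressed, both statements follow.
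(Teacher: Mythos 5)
Your proposal is correct, and part (1) together with the two outer links of part (2) follows the same route as the paper: the paper also establishes part (1) inside its proof of Theorem \ref{thm3} by sandwiching the convex combination $w_1h(V^2_{1(f)})+w_2h(V^2_{2(f)})$ between $\min$ and $\max$ of $h(V^2_{1(f)})$, $h(V^2_{2(f)})$ and then applying $h^{-1}$, and it likewise obtains $\min \leq V^2_{H(f)}$ and $V^2_{G(f)}\leq\max$ as special cases. Where you genuinely diverge is the one nontrivial link $V^2_{H(f)}\leq V^2_{G(f)}$: the paper writes $V^2_{H(f)} = h_2^{-1}\circ\bigl(h_2\circ h_1^{-1}\bigr)\bigl(w_1h_1(V^2_{1(f)})+w_2h_1(V^2_{2(f)})\bigr)$ with $h_1(u)=1/u$, $h_2(u)=\log u$ and invokes convexity of $h_2\circ h_1^{-1}(v)=-\log v$ — the general Kolmogorov--Nagumo comparison criterion for quasi-arithmetic means — whereas you prove the harmonic--geometric inequality directly from $(a-b)^2\geq 0$. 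The paper's argument is more general (it would compare any two quasi-arithmetic means with convex transfer function, and works for arbitrary weights), while yours is more elementary and, importantly, explicitly handles the degenerate cases $V^2_{1(f)}=0$ or $V^2_{2(f)}=0$, where $1/u$ and $\log u$ are singular and the paper's representation-based manipulation is not literally defined; the paper passes over this point in silence. Both proofs are sound for the statement as given.
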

The proof of Theorem \ref{thm4} is provided in the Appendix.
When  $f(x) = x^2-x$ with $r=c$, we observe that $V^2_{1(f)} = V^2_{2(f)} = V^2_{3(f)} = V^2_{H(f)} = V^2_{G(f)} = V^2$ (being the Cram\'er’s coefficient).

\section{Relationship between measures and bivariate normal distribution}\label{sec3}
In the analysis of the two-way contingency table, \cite{tallis1962maximum}, \cite{lancaster1964estimation}, \cite{kirk1973numerical}, and \cite{divgi1979calculation} proposed an approach based on the bivariate normal distribution.
This approach assumes that the classification of rows and columns results from continuous random variables with a bivariate normal distribution, that is, the sample contingency table comes from a discretized bivariate normal distribution.
In many contexts, this assumption is invalid and a more general approach is needed.
Therefore, \cite{goodman1981association, goodman1985analysis} presented an approximation close to the correlation structure of discrete bivariate distributions based on the association model, and \cite{becker1989bivariate} also made a similar proposal based on the KL-divergence.
Assuming a bivariate normal distribution is important for examining the correlation structure of the contingency table, and previous studies have considered the association based on the model.
In this section, we explain the relationship between the measures $V^2_{t(f)}$ ($t = 1,2,3$) and the correlation coefficient $\rho$ when a bivariate normal distribution can be assumed for the latent variables in the contingency table.

Assuming a latent variable, the ($i$,$j$) cell probability $p_{ij}$ of the $r \times c$ contingency table is denoted as
\begin{align*}
p_{ij} &= P(X = i, Y=j) \\
&= P(x_{i-1} < X^* \leq x_i, y_{j-1} < Y^* \leq y_j) \\
&= f_{X^*, Y^*}(\tilde{x}_i, \tilde{y}_j)\Delta_{x_i} \Delta_{y_j},
\end{align*}
where $x_{i-1} < \tilde{x}_i \leq x_i$, $y_{j-1} < \tilde{y}_j \leq y_j$ and  $f_{X^*, Y^*}(\tilde{x}_i, \tilde{y}_j)$ is a continuous joint density function of random variables $X^*$ and $Y^*$.
$\Delta_{x_i}$ and $\Delta_{y_j}$ are the width of intervals $(x_{i-1}, x_{i}]$ and $(y_{j-1}, y_{j}]$, respectively. 
In this situation, it is possible to approximate $I_f(\{p_{ij}\};\{p_{i \cdot}p_{\cdot j}\})$ as follows:
\begin{align}
\begin{split}\label{approximateV2}
&I_f(\{p_{ij}\};\{p_{i \cdot}p_{\cdot j}\}) \\
&= \sum^r_{i=1} \sum^c_{j=1} f_{X^*}(\tilde{x}_i) f_{Y^*}(\tilde{y}_j) f\left(\frac{f_{X^*,Y^*}(\tilde{x}_i, \tilde{y}_j)}{f_{X^*}(\tilde{x}_i) f_{Y^*}(\tilde{y}_j)} \right)\Delta_{x_i} \Delta_{y_j} \\
&\xrightarrow[\Delta_{x_i} \Delta_{y_j} \to 0]{} \int^{\infty}_{-\infty} \int^{\infty}_{-\infty} f_{X^*}(x) f_{Y^*}(y) f\left(\frac{f_{X^*,Y^*}(x, y)}{f_{X^*}(x) f_{Y^*}(y)} \right) dx dy,
\end{split}
\end{align}
where $f_{X^*}(x)$ and $f_{Y^*}(y)$ are marginal probability density functions of $f_{X^*,Y^*}(x, y)$.

Let $X^*$ and $Y^*$ be random variables according to the bivariate normal distribution and joint density function is
\begin{align*}
\begin{split}
f_{X^*,Y^*}(x, y) &= \frac{1}{2\pi \sigma_x \sigma_y \sqrt{1-\rho^2}} \exp \left[ -\frac{1}{2(1-\rho^2)} \right. \\
&\quad \left. \qquad \left\lbrace \left(\frac{x-\mu_x}{\sigma_x} \right)^2 - 2\rho \left(\frac{x-\mu_x}{\sigma_x}  \right) \left(\frac{y-\mu_y}{\sigma_y}  \right) + \left(\frac{y-\mu_y}{\sigma_y}  \right)^2 \right\rbrace \right] \\
&\quad -\infty < x < +\infty, \quad -\infty < y < +\infty 
\end{split} 
\end{align*}
where $\rho$ is the correlation coefficient between $X^*$ and $Y^*$. 
The value of the correlation coefficient ranges from $-1$ to $1$. 
In the formula, the standard deviation $\sigma_x$ and $\sigma_y$ are positive constants. However, the means $\mu_x$ and $\mu_y$ do not have to be positive constants.
When applying $f(x)=(x^{\lambda+1}-x)/\lambda(\lambda+1)$, the relationship between the power-divergence and correlation coefficient $\rho$ is expressed as
\begin{align}
I_f(\{p_{ij}\};\{p_{i \cdot}p_{\cdot j}\}) &\approx \frac{1}{\lambda(\lambda + 1)} \left\lbrace (1-\rho^2)^{-\frac{\lambda}{2}}(1-\lambda^2 \rho^2)^{-\frac{1}{2}}-1 \right\rbrace,
\label{V2vslambda1}
\end{align}
where $\lambda < 1/\vert \rho\vert$.
Therefore, it is better to use less than 1 for $\lambda$ under the assumption.
If we want to capture the relationship between the measures and correlation coefficient $\rho$, by applying the value at $\lambda = 0$, which is assumed to be the continuous limit as $\lambda \rightarrow 0$ (i.e $f(x) = x\log x$), it can be expressed as 
\begin{align}
I_f(\{p_{ij}\};\{p_{i \cdot}p_{\cdot j}\}) &\approx - \frac{1}{2}\log(1-\rho^2).
\label{V2vslambda2}
\end{align}
When we consider the latent variable and approximate a divergence, the relationship can be shown as (\ref{V2vslambda1}) and (\ref{V2vslambda2}).
These equations show that the value is monotonically increasing with respect to $\vert \rho\vert$.
Therefore, by considering the measures, the relationship can be captured and an upper limit can be established.

This section showed the relationship between the measures and correlation coefficient $\rho$ using the bivariate normal distribution and $f(x)=(x^{\lambda+1}-x)/\lambda(\lambda+1)$ as examples.
However, in the $\theta$-divergence and more general divergence cases, it is difficult to calculate (\ref{approximateV2}) in a closed form.
Therefore, in the next section, we confirm that the value of the measures increases monotonically as the correlation coefficient moves away from $0$, even when the $\theta$-divergence is applied.

\section{Numerical study}\label{sec4}
This section compares the measurements by function or parameter.
In the numerical study, we use artificial data generated from discrete bivariate distributions with zero means and unit variances, as in \cite{goodman1981association, goodman1985analysis}, and \cite{becker1989bivariate}.
The method of partitioning the bivariate normal distribution is to use cut-points that generate uniform marginal distributions.
For instance, when creating a $4\times 4$ probability table, we split the bivariate normal distribution using $z_{0.25}$, $z_{0.50}$, and $z_{0.75}$ as cut-points.
The $4\times 4$ artificial probability tables created for the numerical study are given in the Appendix.
The benefit of this method is that the strength of association between the row and column variables in the contingency table is known from the bivariate normal distribution, which is appropriate for examining the measures.
For the comparison of the measures, we use Tomizawa's power-divergence-type measures ($f(x)=(x^{\lambda+1}-x)/\lambda(\lambda+1)$ for $0 \leq \lambda \leq 1$) and the newly proposed $\theta$-divergence-type measures ($f(x) = (x-1)^2/(\theta x + 1 - \theta) + (x-1)/(1 - \theta)$ for $0 \leq \theta < 1$), both of which are a single-parameter divergence and extensions of Cram\'er's coefficient $V^2$.

Table \ref{rrxz} presents the values of the measures $V^2_{t(f)}$ ($t=1,2,3$) for each $4\times 4$ probability tables with $\rho = 0.0, 0.4, 0.8, 1.0$. 
Notably, in the case of $r \times r$ artificial contingency tables, each of $\{p_{i \cdot}\}$ and $\{p_{\cdot j}\}$ is constant so that $V^2_{1(f)} = V^2_{2(f)} = V^2_{3(f)}$.
Table \ref{rrxz} shows that, when the correlation is away from $0$, $\hat{V}^2_{t(f)}$ are close to $1.0$. 
Further, $\rho = 0$ if and only if the measures show that a structure of null association exists in the table, and $\rho = 1.0$ if and only if the measures confirm that a structure of complete association exists.
The sharp increase around $\rho = 1.0$ can be explained by the previous section's relationship between the measures and correlation coefficients $\rho$.
Another important finding is how each measure increases at $\rho = 0.4, 0.8$.
In the case of $V^2$ ($\lambda = 1, \theta = 0$), the increasing trend of $V^2$ with the change of $\rho$ is slower than most measures.
It may not be possible to accurately determine small differences in the strength of association when comparing multiple contingency tables, so having a broad perspective by extension may allow careful analysis.
These results suggest that $V^2$ may not accurately determine small differences in the strength of association when comparing multiple contingency tables made by the bivariate normal distribution.
The same is true for the power-divergence-type measures, which have an increasing trend similar to $V^2$.
We may consider that it is better to use the $\theta$-divergence-type measures with $\theta=0.7$ in order to determine the small differences in the strength of association.
Values of $V^2_{t(f)}$ for other $\rho$, and coverage probabilities are provided in the Appendix.
\begin{table}[hbtp]
\caption{Values of measures $V^2_{t(f)}$ $(t =1, 2, 3)$ setting (a) the power-divergence for any $\lambda$ and (b) the $\theta$-divergence for any $\theta$ in $4\times 4$ probability tables with $\rho = 0, 0.4, 0.8, 1.0$.}
\label{rrxz}
\centering
\begin{multicols}{2}
\begin{tabular}{ccccc}
\multicolumn{5}{c}{(a) applying the power-divergence}  \\
\hline
& \multicolumn{4}{c}{the correlation $\rho$} \\ \cline{2-5}
$\lambda$ & 0.0 & 0.4  & 0.8 & 1.0 \\ \hline
0.0 & 0.000 & 0.046 & 0.254 & 1.000 \\
0.2 & 0.000 & 0.048 & 0.255 & 1.000 \\
0.4 & 0.000 & 0.048 & 0.251 & 1.000 \\
0.6 & 0.000 & 0.046 & 0.244 & 1.000 \\
0.8 & 0.000 & 0.044 & 0.234 & 1.000 \\
1.0 & 0.000 & 0.042 & 0.224 & 1.000 \\  \hline
\end{tabular}

\begin{tabular}{ccccc}
\multicolumn{5}{c}{(b) applying the $\theta$-divergence }  \\
\hline
& \multicolumn{4}{c}{the correlation $\rho$}  \\ \cline{2-5}
$\theta$ & $0.0$ & $0.4$& $0.8$ & $1.0$ \\ \hline
0.0 & 0.000 & 0.042 & 0.224 & 1.000 \\
0.1 & 0.000 & 0.049 & 0.249 & 1.000 \\
0.3 & 0.000 & 0.055 & 0.278 & 1.000 \\
0.5 & 0.000 & 0.053 & 0.285 & 1.000 \\
0.7 & 0.000 & 0.042 & 0.266 & 1.000 \\
0.9 & 0.000 & 0.019 & 0.191 & 1.000 \\   \hline
\end{tabular}
\end{multicols}
\end{table}

\section{Approximate confidence intervals for measure}\label{sec5}
In the previous section, we confirmed the values of the proposed measures with simulated data.
However, when analyzing real data, $p_{ij}$ is unknown, and these values are also unknown. Hence, it is necessary to construct confidence intervals.
Therefore, in this section, we construct asymptotic confidence intervals by using the delta method.
$\{n_{ij}\}$ denotes the observed frequency from multinomial distribution, and $n$ denotes the total number of observations, namely, $\sum^r_{i=1} \sum^c_{j=1}n_{ij}$. 
The approximate standard error and large-sample confidence interval are obtained for $V^2_{t(f)} (t =1, 2, 3)$ using the delta method, which is described in, for example, \cite{agresti2003categorical, bishop2007discrete}, etc. 
The estimator of $V^2_{t(f)}$ (i.e., $\hat{V}^2_{t(f)}$) is given by $V^2_{t(f)}$ with $\{p_{ij}\}$ replaced by $\{ \hat{p}_{ij}\}$, where $\hat{p}_{ij} = n_{ij}/n$. 
When using the delta method, $\sqrt{n}(\hat{V}^2_{t(f)} - V^2_{t(f)})$ has a asymptotically normal distribution (i.e., as $n \to \infty$) with mean $0$ and variance $\sigma^2[V^2_{t(f)}]$. 
Refer to the Appendix for the values of $\sigma^2[V^2_{t(f)}]$.

We define $f(x)$ as once-differentiable and strictly convex and $f'(x)$ as a first derivative of $f(x)$ with respect to $x$. 
Assume $\hat{\sigma}^2[V^2_{t(f)}]$ be $\sigma^2[V^2_{t(f)}]$ with $\{p_{ij}\}$ replaced by $\{ \hat{p}_{ij}\}$. 
Then, an estimated standard error of $\hat{V}^2_{t(f)}$ is $\hat{\sigma}[V^2_{t(f)}] / \sqrt{n}$, and an approximate $100(1-\alpha)$ percent confidence interval of $\hat{V}^2_{t(f)}$ is $\hat{V}^2_{t(f)} \pm z_{\alpha/2} \hat{\sigma}[V^2_{t(f)}] / \sqrt{n}$, where $z_{\alpha/2}$ is the upper $\alpha/2$ percentage point from the standard normal distribution.

\section{Examples}\label{sec6}
In this section, we explain the benefits of using the f-divergence to extend the measures, with some actual data examples. 
We use Tomizawa's power-divergence-type measures ($f(x)=(x^{\lambda+1}-x)/\lambda(\lambda+1)$ for $\lambda = 0.0, 0.6, 1.0, 1.2, 1.5$) and the newly proposed $\theta$-divergence-type measures ($f(x) = (x-1)^2/(\theta x + 1 - \theta) + (x-1)/(1 - \theta)$ for $\theta = 0.0, 0.3, 0.5, 0.7, 0.9$), both of which are a single-parameter divergence and extensions of Cram\'er's coefficient $V^2$.
Let observe the estimates of the measures and confidence intervals.

\subsection*{Example 1}
Consider the data in Table \ref{race}, taken from the 2006 General Social Survey.
These are data, which show the relationship between family income and education in the United States separately for black and white categories of race.
By applying the measures $V^2_{1(f)}$, we consider to what educational degree can be improved when the prediction of family income for black and white categories of an individual is known.
\begin{table}[hbtp]
\caption{Data on educational degrees and family income, by race}
\centering
\label{race}
\begin{tabular}{lcccc}
\multicolumn{5}{c}{(a) Black Person} \\
\hline
 & \multicolumn{3}{c}{Family Income} & \\ \cline{2-4}
Highest Degree & Blow Average & Average  & Above Average & Total  \\ \hline
$<$ High school & 43 & 36 & 5 & 84  \\
High school, junior college & 104 & 140 & 23 & 267 \\
College, grad school & 16 & 30 & 18 & 64 \\ \hline
Total & 163 & 206 & 46  & 415 \\ \hline
\multicolumn{2}{l}{Source: 2006 General Social Survey}  \\
\end{tabular}
\\
\begin{tabular}{lcccc}
\multicolumn{5}{c}{(b) White Person} \\
\hline
 & \multicolumn{3}{c}{Family Income} & \\ \cline{2-4}
Highest Degree & Blow Average & Average  & Above Average & Total  \\ \hline
$<$ High school & 114 & 97 & 12 & 223  \\
High school, junior college & 410 & 658 & 221 & 1289 \\
College, grad school & 97 & 259 & 287 & 643 \\ \hline
Total & 621 & 1014 & 520  & 2155 \\ \hline
\multicolumn{2}{l}{Source: 2006 General Social Survey}  \\
\end{tabular}
\end{table}

Table \ref{race_m} shows the estimates of the measures, standard errors, and $95\%$ confidence intervals. Tables \ref{race_m}(a1, b1) and \ref{race_m}(b1, b2) show the results of the analysis of Tables \ref{race}(a) and \ref{race}(b), respectively.
One interesting finding is the confidence intervals for all $V^2_{1(f)}$ do not contain zero for any $\lambda$ and $\theta$. 
The results show that the two actual data have an associated structure from a point of view, other than Cram\'er's coefficient $V^2$.
Another important finding is the comparison of the confidence intervals.
For conventional the power-divergence-type measures, a comparison of Tables \ref{race_m}(a1) and \ref{race_m}(b1) shows that confidence intervals overlap for each $\lambda$.
Meanwhile, when $\theta = 0.9$ in Tables \ref{race_m}(a2) and \ref{race_m}(b2), the confidence intervals do not overlap. Table \ref{race}(a), where the estimate is closer to $0$, has higher independence.
Therefore, this analysis revealed the merit of using the measures extended with the $f$-divergence to express the differences that did not appear in the conventional one.
\begin{table}[hbtp]
\caption{Estimate of the measure $V^2_{1(f)}$, estimated approximate standard error for $\hat{V}^2_{1(f)}$, and approximate $95\%$ confidence interval of $V^2_{1(f)}$ applying (a1, b1) the power-divergence for any $\lambda$ and (a2, b2) the $\theta$-divergence for any $\theta$.}
\label{race_m}
\centering
\begin{multicols}{2}
\begin{tabular}{cccc}
\multicolumn{4}{c}{(a1) applying the power-divergence}   \\
\hline
$\lambda$ & $\hat{V}^2_{1(f)}$ & SE & $95\%$CI \\  \hline
0.0 & 0.032 & 0.014 & (0.005, 0.059) \\
0.6 & 0.036 & 0.016 & (0.005, 0.067) \\
1.0 & 0.034 & 0.016 & (0.003, 0.064) \\
1.2 & 0.032 & 0.015 & (0.002, 0.061) \\
1.5 & 0.028 & 0.014 & (0.001, 0.056) \\ \hline
\end{tabular}

\begin{tabular}{cccc}
\\
\multicolumn{4}{c}{(a2) applying the $\theta$-divergence}  \\
\hline
$\theta$ & $\hat{V}^2_{1(f)}$ & SE & $95\%$CI  \\  \hline
0.0 & 0.034 & 0.016 & (0.003, 0.064) \\
0.3 & 0.040 & 0.017 & (0.007, 0.072) \\
0.5 & 0.034 & 0.014 & (0.006, 0.062) \\
0.7 & 0.024 & 0.010 & (0.004, 0.044) \\
0.9 & 0.009 & 0.004 & (0.001, 0.017) \\ \hline
\end{tabular}

\begin{tabular}{cccc}
\multicolumn{4}{c}{(b1) applying the power-divergence}   \\
\hline
$\lambda$ & $\hat{V}^2_{1(f)}$ & SE & $95\%$CI \\  \hline
0.0 & 0.068 & 0.008 & (0.052, 0.083) \\
0.6 & 0.070 & 0.008 & (0.054, 0.086) \\
1.0 & 0.062 & 0.007 & (0.047, 0.076) \\
1.2 & 0.056 & 0.007 & (0.042, 0.069) \\
1.5 & 0.046 & 0.006 & (0.034, 0.057) \\ \hline
\end{tabular}

\begin{tabular}{cccc}
\\
\multicolumn{4}{c}{(b2) applying the $\theta$-divergence}  \\
\hline
$\theta$ & $\hat{V}^2_{1(f)}$ & SE & $95\%$CI  \\  \hline
0.0 & 0.062 & 0.007 & (0.047, 0.076) \\
0.3 & 0.084 & 0.010 & (0.065, 0.102) \\
0.5 & 0.076 & 0.009 & (0.059, 0.094) \\
0.7 & 0.058 & 0.007 & (0.044, 0.072) \\
0.9 & 0.026 & 0.004 & (0.018, 0.033) \\ \hline
\end{tabular}
\end{multicols}

\end{table}

\subsection*{Example 2}
Consider the data in Table \ref{eyes}, obtained from \cite{tomizawa1985analysis}. 
These tables provide information on the unaided distance vision of 4746 university students aged 18 to about 25 and 3168 elementary students aged 6 to about 12. 
In Table \ref{eyes}, the row and column variables are the right and left eye grades, respectively, with the categories ordered from the highest grade (1) to the lowest grade (4).
As the right and left eye grades have similar classifications, we apply measure $V^2_{H(f)}$.
\begin{table}[hbtp]
\caption{Unaided distance vision data for university and elementary students}
\centering
\label{eyes}
\begin{tabular}{lccccc}
\multicolumn{6}{c}{(a) University Students} \\
\hline
 & \multicolumn{4}{c}{Left eye grade} & \\ \cline{2-5}
Right eye grade & (1) & (2) & (3) & (4) & Total  \\ \hline
Highest (1) & 1291 & 130 & 40 & 22 & 1483  \\
Second (2) & 149 & 221 & 114 & 23 & 507 \\
Third (3) & 64 & 124 & 660 & 185 & 1033 \\
Lowest (4)  & 20 & 25 & 249 & 1429 & 1723 \\   \hline
Total & 1524 & 500 & 1063 & 1659  & 4746 \\ \hline
\multicolumn{2}{l}{Source: \cite{tomizawa1985analysis}}  \\
\end{tabular}

\begin{tabular}{lccccc}
\\
\multicolumn{6}{c}{(b) Elementary Students}\\
\hline
 & \multicolumn{4}{c}{Left eye grade} & \\ \cline{2-5}
Right eye grade & (1) & (2) & (3) & (4) & Total  \\ \hline
Highest (1) & 2470 & 126 & 21 & 10 & 2627  \\
Second (2) & 96 & 138 & 33 & 5 & 272 \\
Third (3) & 10 & 42 & 75 & 15 & 142 \\
Lowest (4)  & 12 & 7 & 16 & 92 & 127 \\   \hline
Total & 2588 & 313 & 145 & 122  & 3168 \\ \hline
\multicolumn{2}{l}{Source: \cite{tomizawa1985analysis}} 
\end{tabular}
\end{table}

Table \ref{eyesr} provides the estimates of the measures, standard errors, and confidence intervals.
Tables \ref{eyesr}(a1, b1) and \ref{eyesr}(b1, b2) show the results of the analysis of Tables \ref{eyes}(a) and \ref{eyes}(b), respectively.
The results of this analysis show that the two actual data have a strong structure of association in terms of the estimates and confidence intervals for all $\lambda$ and $\theta$.
After comparing the value of the measures between Tables \ref{eyesr}(a1, b1) and \ref{eyesr}(b1, b2), we found that the strength of association between the right and left eyes is greater for elementary school students in terms of the estimates for each parameter.
After comparing the confidence intervals between Tables \ref{eyesr}(a1, b1) and \ref{eyesr}(b1, b2) can be similarly concluded, we found that the confidence intervals for each $\lambda$ overlap, but not when $\theta = 0.5, 0.7, 0.9$.
Another interesting finding is that, unlike Example $1$, the confidence intervals of the results of the analysis in Example $2$ do not overlap when $\theta = 0.5, 0.7 $.
In terms of the triangular discrimination $\Delta$ ($\theta = 0.5$), which is not observed in the $V^2$ or the power-divergence-type measures, it can be assumed that this result provides evidence that Table \ref{eyes}(b) has a stronger association structure.
Therefore, the extension by the $f$-divergence helps us perform the analysis safely.
\begin{table}[hbtp]
\caption{Estimate of measures $V^2_{H(f)}$, estimated approximate standard error for $\hat{V}^2_{H(f)}$, and approximate $95\%$ confidence interval of $V^2_{H(f)}$ applying (a1, b1) the power-divergence for any $\lambda$ and (a2, b2) the $\theta$-divergence for any $\theta$.}
\label{eyesr}
\centering
\begin{multicols}{2}
\begin{tabular}{cccc}
\multicolumn{4}{c}{(a1) applying the power-divergence}   \\
\hline
$\lambda$ & $\hat{V}^2_{H(f)}$ & SE & $95\%$CI \\  \hline
0.0 & 0.413 & 0.017 & (0.379, 0.446) \\
0.6 & 0.400 & 0.018 & (0.364, 0.436) \\
1.0 & 0.357 & 0.020 & (0.317, 0.397) \\
1.2 & 0.334 & 0.022 & (0.291, 0.376) \\
1.5 & 0.300 & 0.023 & (0.255, 0.346) \\ \hline
\end{tabular}

\begin{tabular}{cccc}
\\
\multicolumn{4}{c}{(a2) applying the $\theta$-divergence}  \\
\hline
$\theta$ & $\hat{V}^2_{H(f)}$ & SE & $95\%$CI \\  \hline
0.0 & 0.357 & 0.020 & (0.317, 0.397) \\
0.3 & 0.479 & 0.018 & (0.444, 0.514) \\
0.5 & 0.453 & 0.019 & (0.416, 0.490) \\
0.7 & 0.384 & 0.020 & (0.345, 0.423) \\
0.9 & 0.223 & 0.019 & (0.185, 0.261) \\ \hline
\end{tabular}

\begin{tabular}{cccc}
\multicolumn{4}{c}{(b1) applying the power-divergence}   \\
\hline
$\lambda$ & $\hat{V}^2_{H(f)}$ & SE & $95\%$CI \\  \hline
0.0 & 0.461 & 0.009 & (0.443, 0.479) \\
0.6 & 0.445 & 0.009 & (0.428, 0.463) \\
1.0 & 0.402 & 0.009 & (0.384, 0.420) \\
1.2 & 0.375 & 0.009 & (0.357, 0.392) \\
1.5 & 0.330 & 0.009 & (0.312, 0.348) \\ \hline
\end{tabular}

\begin{tabular}{cccc}
\\
\multicolumn{4}{c}{(2b) applying the $\theta$-divergence}  \\
\hline
$\theta$ & $\hat{V}^2_{H(f)}$ & SE & $95\%$CI \\  \hline
0.0 & 0.402 & 0.009 & (0.384, 0.420) \\
0.3 & 0.509 & 0.009 & (0.492, 0.527) \\
0.5 & 0.510 & 0.009 & (0.492, 0.528) \\
0.7 & 0.472 & 0.010 & (0.452, 0.492) \\
0.9 & 0.337 & 0.013 & (0.311, 0.363) \\ \hline
\end{tabular}
\end{multicols}

\end{table}

\begin{remark}\label{rem1}
(Brief guideline for choosing functions and parameters)
Our proposed measures $V^2_{t(f)} (t=1,2,3)$ have functions $f(x)$ and the function's parameters, which are set by the user.
\cite{momozaki2022extension} discussed how these should be selected, and we also agree with the discussion.

We recommend using various values instead of only one.
$V^2_{t(f)}$ are a broad class of measures that include, as special cases ,the power-divergence-type measures ($f(x)=(x^{\lambda+1}-x)/\lambda(\lambda+1)$) and the $\theta$-divergence ($f(x) = (x-1)^2/(\theta x + 1 - \theta) + (x-1)/(1 - \theta)$), which encompass Cram\'er's coefficient $V^2$ and others.
It may be useful when we are interested in exploring new contingency table data with correlated row and column variables.
%
While it is important to investigate various functions and parameters to give safety to the results of the analysis, the user should not select values for own convenience.
In the actual analysis, it is considered that the results can be evaluated as the safest by varying the tuning parameters with a function that includes the most reasonable distance in the user's research field.
However, when the reasonable distance is unclear, it is necessary to investigate with many functions because it is necessary to give considerations from various viewpoints (e.g., Squared $L_2$ family distance, Shannon's entropy family distance, etc).
In addition, some choice suggestions can be made by considering Cram\'er's coefficient $V^2$  itself.

Cram\'er's coefficient $V^2$ is a popular measure for evaluating the degree of association between row and column variables in two-way contingency tables, but there are several limitations.
\cite{kvaalseth2018alternative} points out some limitations of Cram\'er's coefficient $V^2$.
The main objective of this study is to generalize Cram\'er's coefficient $V^2$, but it is also possible to provide two improvements.
The first is that meaningful values are difficult to interpret.
If the function applied to the $f$-divergence is integrable in (\ref{approximateV2}) as well as the power-divergence in (\ref{V2vslambda1}) and (\ref{V2vslambda2}), it can give an operationally meaningful interpretation to the value of the measure corresponding to the correlation coefficient $\rho$.
The second is that the degree of association may be overestimated when the observation frequency is small.
This limitation also has Tomizawa's power-divergence-type measures $V^2_{t(\lambda)} (t=1,2,3)$, which have not been improved by generalization with the tuning parameter $\lambda$.
In such cases, an evaluation can be given from a point of view similar to the Euclidean distance using the $\theta$-divergence (except $\theta=0$) or other K-divergence, etc.
As an example, consider the article data in Table \ref{article}.
The data indicate clearly a very near independence between row and column categories with the Euclidean distance $\vert p_{ij}-p_{i\cdot}p_{\cdot j}\vert$ being either $0$ or $0.01$ and $\sum^3_{i=1} \sum^3_{j=1} \vert p_{ij}-p_{i\cdot}p_{\cdot j}\vert = 0.04$.
However, Cram\'er's coefficient and Tomizawa's power-divergence-type measure both have large values, while the $\theta$-divergence-type measure is close to the Euclidean distance in Table \ref{article_result}.
There are other limitations, but it may be possible to improve the limitations of Cram\'er's coefficient $V^2$ while ensuring the properties of the measure by the function to be applied.
\begin{table}[hbtp]
\caption{Artificial data to show differences from Cram\'er's coefficient $V^2$ and Tomizawa's power-divergence-type measures $V^2_{t(\lambda)} (t=1,2,3)$.}
\centering
\label{article}
\begin{tabular}{ccccc}
\hline
& (1) & (2) & (3) & Total  \\ \hline
(1) & 0.50 & 0 & 0.20 & 0.70  \\
(2) & 0 & 0.01 & 0.01 & 0.02 \\
(3) & 0.20 & 0 & 0.08 & 0.28 \\ \hline
Total & 0.70 & 0.01 & 0.29 & 1 \\ \hline
\end{tabular}
\end{table}
\begin{table}[hbtp]
\caption{Values of measures $V^2_{H(f)}$ setting (a) the power-divergence for any $\lambda$ and (b) the $\theta$-divergence for Table \ref{article}.}
\begin{multicols}{2}
\centering
\label{article_result}
(a) the power-divergence-type measure
\begin{tabular}{cc} \hline
$\lambda$ & $V^2_{H(f)}$  \\ \hline
0.0 & 0.081  \\
0.6 & 0.159  \\
1.0 & 0.254  \\
1.2 & 0.298  \\
1.5 & 0.335  \\ \hline
\end{tabular}

(b) the $\theta$-divergence-type measure
\begin{tabular}{cc} \hline
$\theta$ & $V^2_{H(f)}$  \\ \hline
0.0 & 0.254  \\
0.3 & 0.065  \\
0.5 & 0.058  \\
0.7 & 0.056  \\
0.9 & 0.055  \\ \hline
\end{tabular}
\end{multicols}
\end{table}
\end{remark}

\section{Conclusion}\label{sec7}
We found that the strength of association between the row and column variables in two-way contingency tables can be safely analyzed by proposing measures $V^2_{t(f)} (t =1, 2, 3)$ that generalizes Cram\'er's coefficient $V^2$ via the $f$-divergence.
First, this study proved that a measure applying a function $f(x)$ that satisfies the condition of the $f$-divergence has desirable properties for measuring the strength of association in contingency tables.
Hence, we can easily construct a new measure using a divergence that has essential properties for the analyst.
Furthermore, we can give a further interpretation of the association between rows and columns in contingency tables, which could not be obtained with a conventional one.
Second, we showed the relationship between the proposed measures $V^2_{t(f)}$ and the bivariate normal distribution.
We found that the relationship between the power-divergence and correlation coefficient $\rho$ is approximately formulated and more succinct with $\lambda = 0$.

Measures $V^2_{t(f)}$ always range between $0$ and $1$, independent of the dimensions $r$ and $c$ and the sample size $n$. 
Thus, comparing the strength of association between the row and column variables in several tables is useful. 
This is crucial in checking the relative magnitude of the strength of association between the row and column variables to the degree of complete association. 
Specifically, $V^2_{1(f)}$ ($V^2_{2(f)}$) would be effective when the row and column variables are the response (explanatory) and explanatory (response) variables, respectively, while $V^2_{3(f)}$ would be useful when explanatory and response variables are not defined.
Furthermore, we first need to check if independence is established by using a test statistic, such as Pearson's chi-squared statistics, to analyze the strength of association between the row and column variables. 
Then, if it is determined that a structure of the association exists, the next step would be to measure the strength of the association by using $V^2_{t(f)}$. 
However, if the table is determined as independent, employed $V^2_{t(f)}$ may not be meaningful. 
Furthermore, $V^2_{t(f)}$ is invariant under any permutation of the categories.
Therefore, we can apply it to the data analysis on a nominal or ordinal scale.

We observe that (i) the estimate of the strength of association should be considered in terms of an approximate confidence interval for $V^2_{t(f)}$ rather than $\hat{V}^2_{t(f)}$ itself and (ii) the measure helps describe relative magnitudes (of the strength of association), rather than absolute magnitudes. 

\section*{Acknowledgments}
This work was supported by JSPS Grant-in-Aid for Scientific Research (C) Number JP20K03756.

\appendix

\setcounter{equation}{0}
\def\theequation{A.\arabic{equation}}
\def\thethm{A.\arabic{thm}}
\def\thelem{A.\arabic{lem}}

\section{Proofs}\label{secA1}

Before proving the theorem 1 for the measure $V^2_{1(f)}$, the following lemma for the proof is introduced.
\begin{lem}\label{lem-cov}
Let f be a strictly convex function on $[0, +\infty)$ and
\begin{align}
f(x) = \left\{
\begin{array}{cc}
xg(x) & (x > 0)\\
0 & (x = 0)
\end{array}
. \right. \label{eq11g}
\end{align}
Subsequently, $g$ is a strictly monotonically increasing function.
\end{lem}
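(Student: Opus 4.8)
The plan is to identify $g$ explicitly and recognize it as a family of secant slopes of $f$ anchored at the origin, then invoke the monotonicity of secant slopes for strictly convex functions. From the defining relation \eqref{eq11g}, for every $x > 0$ we have $g(x) = f(x)/x$, and since the piecewise definition forces $f(0) = 0$ (consistent with $\lim_{x\to 0} f(x) = 0$), we may rewrite
\begin{align*}
g(x) = \frac{f(x)}{x} = \frac{f(x) - f(0)}{x - 0}.
\end{align*}
Thus $g(x)$ is precisely the slope of the chord joining $(0, f(0))$ to $(x, f(x))$, and the claim reduces to the standard fact that such chord slopes strictly increase as $x$ grows when $f$ is strictly convex.

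Next I would fix arbitrary arguments $0 < x_1 < x_2$ and express the smaller one as a strict convex combination of $0$ and $x_2$: setting $t = x_1/x_2 \in (0,1)$ gives $x_1 = (1-t)\cdot 0 + t\cdot x_2$. Applying strict convexity of $f$ at the two distinct points $0$ and $x_2$ with weight $t$ lying strictly inside $(0,1)$ yields
\begin{align*}
f(x_1) < (1-t) f(0) + t f(x_2) = t f(x_2) = \frac{x_1}{x_2}\, f(x_2).
\end{align*}
Dividing both sides by $x_1 > 0$ produces $f(x_1)/x_1 < f(x_2)/x_2$, that is, $g(x_1) < g(x_2)$, which is exactly the asserted strict monotonicity.

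As for difficulty, there is essentially no hard step: the entire argument is a direct application of strict convexity anchored at the origin. The only points requiring a word of care are, first, confirming that $f(0) = 0$ so that the quotient $f(x)/x$ genuinely represents a secant slope from the origin, and second, checking that the strictness of the inequality is legitimate, which holds because $x_1 \neq x_2$ and the weight $t$ lies strictly in $(0,1)$, so the strict convexity inequality applies without degenerating to an equality. Notably, no appeal to differentiability of $f$ is needed for this route; an equivalent alternative would be to establish $g'(x) = (x f'(x) - f(x))/x^2 > 0$ by combining the strict tangent-line characterization of convexity with $f(0)=0$, but the secant-combination argument above is shorter and avoids differentiating.
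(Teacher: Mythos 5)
Your proof is correct and follows essentially the same route as the paper: both arguments apply strict convexity to the pair of points $0$ and a positive point (using $f(0)=0$) to get $f(tx) < t f(x)$, then divide by the positive argument to conclude $g(tx) < g(x)$. The only cosmetic difference is that you parametrize by two points $x_1 < x_2$ with $t = x_1/x_2$ and frame the quotient as a secant slope, whereas the paper fixes one point and varies the convexity weight $p$; these are the same computation.
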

\begin{proof}[Proof of Lemma \ref{lem-cov}]

Due to the strictly convex function $f$, for any $x_1, x_2 \in [0, +\infty)$ and for any $p \in (0, 1)$, it holds that:
\begin{align*}
f(px_1 + (1-p)x_2) < pf(x_1) + (1-p)f(x_2).
\end{align*}
When $x_2 = 0$ and $x_1 \ne 0$, we have
\begin{align*}
f(px_1) < pf(x_1) &\iff px_1g(px_1) < px_1g(x_1) \\
&\iff g(px_1) < g(x_1).  
\end{align*}
Thus, $g$ is a strictly monotonically increasing function.
\end{proof}

\begin{proof}[Proof of Theorem 1]
The $f$-divergence is first transformed as follows: 
\begin{align*}
I_f(\{p_{ij}\};\{p_{i \cdot}p_{\cdot j}\}) &= \sum^r_{i=1} \sum^c_{j=1}p_{ij} \left( \frac{1}{\frac{p_{ij}}{p_{i \cdot}p_{\cdot j}}} f\left(\frac{p_{ij}}{p_{i \cdot}p_{\cdot j}}  \right)\right) \\
&= \sum^r_{i=1} \sum^c_{j=1}p_{ij} g\left(\frac{p_{ij}}{p_{i \cdot}p_{\cdot j}}  \right), 
\end{align*}
where $g$ is given by \eqref{eq11g}. From Lemma \ref{lem-cov}, since $g$ is the strictly monotonically increasing function, it holds that
\begin{align*}
I_f(\{p_{ij}\};\{p_{i \cdot}p_{\cdot j}\}) \leq \sum^r_{i=1} \sum^c_{j=1}p_{ij} g\left(\frac{1}{p_{i \cdot}}  \right) = \sum^r_{i=1}p^2_{i \cdot} f\left( \frac{1}{p_{i \cdot}} \right).
\end{align*}
Furthermore, from Jensen's inequality, we have
\begin{align}
I_f(\{p_{ij}\};\{p_{i \cdot}p_{\cdot j}\}) &\geq f\left(\sum^r_{i=1} \sum^c_{j=1}p_{i \cdot}p_{\cdot j} \frac{p_{ij}}{p_{i \cdot}p_{\cdot j}} \right) = 0.\label{lowbound}
\end{align}
Hence, $0 \leq V^2_{1(f)} \leq 1$ is obtained.

Afterward, $V^2_{1(f)} = 0$ follows from the property of the $f$-divergence $f\left(1\right) = 0$ if a structure of null association is observable (i.e., $\{p_{ij} = p_{i \cdot}p_{\cdot j}\}$). 
When $V^2_{1(f)} = 0$, it holds that: 
\begin{align*}
I_f(\{p_{ij}\};\{p_{i \cdot}p_{\cdot j}\}) &= \sum^r_{i=1} \sum^c_{j=1}p_{i \cdot}p_{\cdot j} f\left(\frac{p_{ij}}{p_{i \cdot}p_{\cdot j}}  \right) = 0.
\end{align*}
From the equality of \eqref{lowbound} holds, we have $p_{ij} / p_{i \cdot}p_{\cdot j} = 1$ for all $i,j$. 
Thus we obtain $\{p_{ij} = p_{i \cdot}p_{\cdot j}\}$ from the properties of the $f$-divergence.

Finally, if there uniquely exists $i_j$ for each column such that $p_{i_j, j} > 0$ and $p_{ij} = 0$ for all other $i(\neq i_j)$, the measure $V^2_{1(f)}$ can be expressed as:
\begin{align*}
V^2_{1(f)} &= \frac{\sum^r_{i=1} \sum^c_{j=1}p_{i \cdot}p_{ij} f\left(\frac{1}{p_{i \cdot}} \right)}{\sum^r_{i=1}p^2_{i \cdot} f\left( \frac{1}{p_{i \cdot}} \right)} = 1.
\end{align*}
Contrariwise, when $V^2_{1(f)} = 1$, we have
\begin{align*}
0 &= I_f(\{p_{ij}\};\{p_{i \cdot}p_{\cdot j}\}) - K_{1(f)} \\
&= \sum^r_{i=1} \sum^c_{j=1}p_{ij} \left( \frac{1}{\frac{p_{ij}}{p_{i \cdot}p_{\cdot j}}} f\left(\frac{p_{ij}}{p_{i \cdot}p_{\cdot j}}  \right)\right) -  \sum^r_{i=1} \sum^c_{j=1}p_{ij} \left( \frac{1}{\frac{1}{p_{i \cdot}}} f\left(\frac{1}{p_{i \cdot}} \right)\right) \\
&=  \sum^r_{i=1} \sum^c_{j=1}p_{ij}  \left(g\left(\frac{p_{ij}}{p_{i \cdot}p_{\cdot j}} \right) - g\left(\frac{1}{p_{i \cdot}} \right) \right). 
\end{align*}
From Lemma \ref{lem-cov}, as $g$ is the strictly monotonically increasing function, the equality is satisfied if for each column, there is only one $i_j$ such that $p_{i_j, j} > 0$ and $p_{ij} = 0$ for the other all $i(\neq i_j)$.
\end{proof}

\begin{proof}[Proof of Theorem 3]
First, for the weighted average of $h(V^2_{1(f)})$ and $h(V^2_{2(f)})$,
\begin{align*}
\min(h(V^2_{1(f)}), h(V^2_{2(f)})) \leq w_1h\left( V^2_{1(f)} \right) + w_2h\left(V^2_{2(f)} \right) \leq \max(h(V^2_{1(f)}), h(V^2_{2(f)})).
\end{align*}
From this correlation, we can show that 
\begin{align*}
\min(V^2_{1(f)}, V^2_{2(f)}) \leq V^2_{3(f)} \leq \max(V^2_{1(f)}, V^2_{2(f)}).
\end{align*}
As $0 \leq V^2_{1(f)} \leq 1$ and $0 \leq V^2_{2(f)} \leq 1$ originate from Theorems 1 and 2, $0 \leq V^2_{3(f)} \leq 1$ holds.

Subsequently, if $\{p_{ij} = p_{i \cdot}p_{\cdot j}\}$ because of $V^2_{1(f)} = V^2_{2(f)} = 0$ emerging from Theorems 1 and 2, $V^2_{3(f)} = 0$ is obvious. Conversely, if $V^2_{3(f)} = 0$, then we have
\begin{align*}
h(0) = w_1h\left( V^2_{1(f)} \right) + w_2h\left(V^2_{2(f)} \right)
\end{align*}
Notably, $h$ is the monotonic function, so the equality is satisfied at $V^2_{1(f)} = V^2_{2(f)} = 0$.
Hence, we obtain $\{p_{ij} = p_{i \cdot}p_{\cdot j}\}$ via Theorems 1 and 2.

Besides, $V^2_{3(f)} = 1$ is obvious in the case of (I) and (II), as $V^2_{1(f)} = V^2_{2(f)} = 1$ according to Theorems 1 and 2. Conversely, if $V^2_{3(f)} = 1$, then we have
\begin{align*}
h(1) = w_1h\left( V^2_{1(f)} \right) + w_2h\left(V^2_{2(f)} \right). 
\end{align*}
As mentioned previously, $h$ is the monotonic function, so the equality is satisfied at $V^2_{1(f)} = V^2_{2(f)} = 1$.
Thus, $V^2_{1(f)} = V^2_{2(f)} = 1$ is satisfied under situations (I) and (II).
\end{proof}

\begin{proof}[Proof of Theorem 4]
The inequality 1 in Theorem 4 has already been validated in the proof of Theorem 3, so it has been omitted.
We show the inequality 2.
Let $h_1(u) = 1/u$ and $h_2(u) = \log u$, then it holds that
\begin{align*}
V^2_{H(f)} &= h_1^{-1} \left( \left(  w_1h_1\left( V^2_{1(f)} \right) + w_2h_1\left(V^2_{2(f)} \right)  \right) \right) \\ 
&= h_2^{-1} \circ h_2 \circ h_1^{-1} \left( \left(  w_1h_1\left( V^2_{1(f)} \right) + w_2h_1\left(V^2_{2(f)} \right)  \right) \right).
\end{align*}
As $h_2 \circ h_1^{-1}$ is a convex function, 
\begin{align*}
V^2_{H(f)} &\leq h_2^{-1} \left( \left(  w_1h_2 \circ h_1^{-1} \circ h_1\left( V^2_{1(f)} \right) + w_2h_2 \circ h_1^{-1} \circ h_1\left(V^2_{2(f)} \right)  \right) \right) \\ 
&\leq h_2^{-1} \left( \left(  w_1h_2\left( V^2_{1(f)} \right) + w_2h_2\left(V^2_{2(f)} \right)  \right) \right) \\
&= V^2_{G(f)}.
\end{align*}
\end{proof}

\section{Asymptotic variance for measures}\label{secA2}
Using the delta method, $\sqrt{n}(\hat{V}^2_{t(f)} - V^2_{t(f)})$ ($t = 1, 2, 3$) has an asymptotic variance
\begin{flalign*}
& \text{(i)\quad}\sigma^2[V^2_{1(f)}] = \frac{1}{(K_{1(f)})^2}\left[ \sum^r_{i=1} \sum^c_{j=1}(\Delta_{1ij})^2p_{ij} - \left( \sum^r_{s=1} \sum^c_{t=1}\Delta_{1st}p_{st}  \right)^2  \right], &
\end{flalign*}
where
\begin{equation*}
\left\{
  \begin{aligned}
  &\Delta_{1ij} = G_{ij} + H_{ij} - V^2_{1(f)} E_{1ij}, \hspace{240pt}\\
  &G_{ij} = \sum^r_{i=1}p_{i \cdot} f\left(\frac{p_{ij}}{p_{i \cdot}p_{\cdot j}}  \right) + \sum^c_{j=1}p_{\cdot j} f\left(\frac{p_{ij}}{p_{i \cdot}p_{\cdot j}}  \right), \\
  & H_{ij} = f^{\prime}\left(\frac{p_{ij}}{p_{i \cdot}p_{\cdot j}}  \right) - \sum^r_{i=1} \frac{p_{ij}}{p_{\cdot j}} f^{\prime} \left( \frac{p_{ij}}{p_{i \cdot}p_{\cdot j}}  \right) - \sum^c_{j=1} \frac{p_{ij}}{p_{i \cdot}} f^{\prime} \left( \frac{p_{ij}}{p_{i \cdot}p_{\cdot j}}  \right), &\\
  & E_{1ij} = 2p_{i \cdot} f\left( \frac{1}{p_{i \cdot}} \right) - f^{\prime}\left( \frac{1}{p_{i \cdot}} \right),
  \end{aligned}
\right.
\end{equation*}

\begin{flalign*}
& \text{(ii)\quad}\sigma^2[V^2_{2(f)}] = \frac{1}{(K_{2(f)})^2}\left[ \sum^r_{i=1} \sum^c_{j=1}(\Delta_{2ij})^2p_{ij} - \left( \sum^r_{s=1} \sum^c_{t=1}\Delta_{2st}p_{st}  \right)^2  \right], &
\end{flalign*}
where
\begin{equation*}
\left\{
  \begin{aligned}
  & \Delta_{2ij} = G_{ij} + H_{ij} - V^2_{2(f)} E_{2ij}, \hspace{240pt}&\\
  & E_{2ij} = 2p_{\cdot j} f\left( \frac{1}{p_{\cdot j}} \right) - f^{\prime}\left( \frac{1}{p_{\cdot j}} \right), &
  \end{aligned}
\right.
\end{equation*}

\begin{flalign*}
& \text{(iii)\quad}\sigma^2[V^2_{3(f)}] = \sum^r_{i=1} \sum^c_{j=1}\left( \frac{\partial V^2_{3(f)}}{\partial p_{ij}} \right)^2p_{ij} - \left( \sum^r_{s=1} \sum^c_{t=1}\left( \frac{\partial V^2_{3(f)}}{\partial p_{st}} \right)p_{st}  \right)^2 . &
\end{flalign*}

\section{Additional numerical studies}
Appendix C presents tables of the results of numerical studies on measures by function and parameter.
In numerical study 1, we compare measures by function and parameter from $4\times 4$ artificial data with different degrees of association.
Numerical study 2 further investigates the effect of the number of rows and columns.
The method of partitioning the bivariate normal distribution is to use cut-points that generate uniform marginal distributions.
For instance, when creating a $4\times 4$ probability table, we split the bivariate normal distribution using $z_{0.25}$, $z_{0.50}$, and $z_{0.75}$ as cut-points.

The measures used in the survey are the Tomizawa's power-divergence type measures ($f(x)=(x^{\lambda+1}-x)/\lambda(\lambda+1)$ for $0 \leq \lambda \leq 1$) and the newly proposed $\theta$-divergence type measures ($f(x) = (x-1)^2/(\theta x + 1 - \theta) + (x-1)/(1 - \theta)$ for $0 \leq \theta < 1$), both of which are a single-parameter divergence and extensions of the Cram\'er's coefficient $V^2$.
Also, use parameters $\lambda = 0.0, 0.2, 0.4, 0.6, 0.8, 1.0$ and $\theta = 0.0, 0.1, 0.3, 0.5, 0.7, 0.9$.

\subsection*{Numerical study 1}
Consider Table \ref{rrxy} to examine the relationship between the degree of association and the measures.
These tables are $4 \times4 $probability tables obtained by discretizing the bivariate normal distribution, and the correlation coefficient $\rho$ corresponds to the associated structure of the contingency table.
Therefore, by changing the correlation coefficient $\rho$ by $0.2$ from $-1$ to $1$, it is possible to capture changes in the measures depending on the degree of association.
\begin{table}[hbtp]
\caption{The $4 \times 4$ probability tables, formed by using three cutpoints for each variable at $z_{0.25}, z_{0.50}, z_{0.75}$ from a bivariate normal distribution with zero means and unit variances, and $\rho$ increasing by $0.2$ from $-1$ to $1$.}
\label{rrxy}
\centering
\scalebox{0.8}{
\begin{tabular}{ccccccccccccc}
\hline
 & (1) & (2) & (3) & (4) &Total &  & (1) & (2) & (3) & (4) &Total \\ \hline
& &\multicolumn{2}{c}{$\rho = 1.0$ } & & &                          & & \multicolumn{2}{c}{$\rho = -0.2$} & &\\
(1) & 0.2500 & 0.0000 & 0.0000 & 0.0000 & 0.2500 &                 (1) & 0.0432 & 0.0563 & 0.0668 & 0.0837 & 0.2500 \\
(2) & 0.0000 & 0.2500 & 0.0000 & 0.0000 & 0.2500 &                 (2) & 0.0563 & 0.0621 & 0.0648 & 0.0668 & 0.2500 \\
(3) & 0.0000 & 0.0000 & 0.2500 & 0.0000 & 0.2500 &                 (3) & 0.0668 & 0.0648 & 0.0621 & 0.0563 & 0.2500 \\
(4) & 0.0000 & 0.0000 & 0.0000 & 0.2500 & 0.2500 &                (4) & 0.0837 & 0.0668 & 0.0563 & 0.0432 & 0.2500 \\
Total & 0.2500 & 0.2500 & 0.2500  & 0.2500 & 1.0000 &             Total & 0.2500 & 0.2500 & 0.2500 & 0.2500 & 1.0000 \\
& &\multicolumn{2}{c}{$\rho = 0.8$ } & & &                          & & \multicolumn{2}{c}{$\rho = -0.4$} & &\\
(1) & 0.1691 & 0.0629 & 0.0164 & 0.0016 & 0.2500 &         (1) & 0.0258 & 0.0477 & 0.0692 & 0.1072 & 0.2500 \\
(2) & 0.0629 & 0.1027 & 0.0680 & 0.0164 & 0.2500 &         (2) & 0.0477 & 0.0632 & 0.0698 & 0.0692 & 0.2500 \\
(3) & 0.0164 & 0.0680 & 0.1027 & 0.0629 & 0.2500 &         (3) & 0.0692 & 0.0698 & 0.0632 & 0.0477 & 0.2500 \\
(4) & 0.0016 & 0.0164 & 0.0629 & 0.1691 & 0.2500 &         (4) & 0.1072 & 0.0692 & 0.0477 & 0.0258 & 0.2500 \\
Total & 0.2500 & 0.2500 & 0.2500 & 0.2500 & 1.0000 &      Total & 0.2500 & 0.2500 & 0.2500 & 0.2500 & 1.0000 \\
& &\multicolumn{2}{c}{$\rho = 0.6$ } & & &                          & & \multicolumn{2}{c}{$\rho = -0.6$} & &\\
(1) & 0.1345 & 0.0691 & 0.0353 & 0.0111 & 0.2500 &         (1) & 0.0111 & 0.0353 & 0.0691 & 0.1345 & 0.2500 \\
(2) & 0.0691 & 0.0797 & 0.0659 & 0.0353 & 0.2500 &         (2) & 0.0353 & 0.0659 & 0.0797 & 0.0691 & 0.2500 \\
(3) & 0.0353 & 0.0659 & 0.0797 & 0.0691 & 0.2500 &         (3) & 0.0691 & 0.0797 & 0.0659 & 0.0353 & 0.2500 \\
(4) & 0.0111 & 0.0353 & 0.0691 & 0.1345 & 0.2500 &         (4) & 0.1345 & 0.0691 & 0.0353 & 0.0111 & 0.2500 \\
Total & 0.2500 & 0.2500 & 0.2500 & 0.2500 & 1.0000 &      Total & 0.2500 & 0.2500 & 0.2500 & 0.2500 & 1.0000 \\
& &\multicolumn{2}{c}{$\rho = 0.4$ } & & &                          & & \multicolumn{2}{c}{$\rho = -0.8$} & &\\
(1) & 0.1072 & 0.0692 & 0.0477 & 0.0258 & 0.2500 &         (1) & 0.0016 & 0.0164 & 0.0629 & 0.1691 & 0.2500 \\
(2) & 0.0692 & 0.0698 & 0.0632 & 0.0477 & 0.2500 &         (2) & 0.0164 & 0.0680 & 0.1027 & 0.0629 & 0.2500 \\
(3) & 0.0477 & 0.0632 & 0.0698 & 0.0692 & 0.2500 &         (3) & 0.0629 & 0.1027 & 0.0680 & 0.0164 & 0.2500 \\
(4) & 0.0258 & 0.0477 & 0.0692 & 0.1072 & 0.2500 &         (4) & 0.1691 & 0.0629 & 0.0164 & 0.0016 & 0.2500 \\
Total & 0.2500 & 0.2500 & 0.2500 & 0.2500 & 1.0000 &      Total & 0.2500 & 0.2500 & 0.2500 & 0.2500 & 1.0000 \\
& &\multicolumn{2}{c}{$\rho = 0.2$ } & & &                          & & \multicolumn{2}{c}{$\rho = -1.0$} & &\\
(1) & 0.0837 & 0.0668 & 0.0563 & 0.0432 & 0.2500 &         (1) & 0.0000 & 0.0000 & 0.0000 & 0.2500 & 0.2500 \\
(2) & 0.0668 & 0.0648 & 0.0621 & 0.0563 & 0.2500 &         (2) & 0.0000 & 0.0000 & 0.2500 & 0.0000 & 0.2500 \\
(3) & 0.0563 & 0.0621 & 0.0648 & 0.0668 & 0.2500 &         (3) & 0.0000 & 0.2500 & 0.0000 & 0.0000 & 0.2500 \\
(4) & 0.0432 & 0.0563 & 0.0668 & 0.0837 & 0.2500 &         (4) & 0.2500 & 0.0000 & 0.0000 & 0.0000 & 0.2500 \\
Total & 0.2500 & 0.2500 & 0.2500 & 0.2500 & 1.0000 &      Total & 0.2500 & 0.2500 & 0.2500 & 0.2500 & 1.0000 \\
& &\multicolumn{2}{c}{$\rho = 0$ }  \\
(1) & 0.0625 & 0.0625 & 0.0625 & 0.0625 & 0.2500 & \\
(2) & 0.0625 & 0.0625 & 0.0625 & 0.0625 & 0.2500 & \\
(3) & 0.0625 & 0.0625 & 0.0625 & 0.0625 & 0.2500 & \\
(4) & 0.0625 & 0.0625 & 0.0625 & 0.0625 & 0.2500 & \\
Total & 0.2500 & 0.2500 & 0.2500 & 0.2500 & 1.0000 & \\  \hline
\end{tabular}
}
\end{table}

Table \ref{rrxz} presents the values of the measures $V^2_{t(f)}$ ($t=1,2,3$). 
Notably, in the case of $r \times r$ artificial contingency tables, each of $\{p_{i \cdot}\}$ and $\{p_{\cdot j}\}$ is constant, so that $V^2_{1(f)} = V^2_{2(f)} = V^2_{3(f)}$.
Table \ref{rrxz} shows that when the correlation is away from $0$, $\hat{V}^2_{t(f)}$ are close to $1$. 
Besides, $\rho = 0$ if and only if the measures show that there is a structure of null association in the table, and $\rho = \pm{1.0}$  if and only if the measures confirm that there is a structure of complete association.
Furthermore, $V^2_{t(f)}$ are invariant under any permutation of the categories.
Therefore, when the absolute values of $\rho$ are equal, $V^2_{t(f)}$ are also equal.
\begin{table}[hbtp]
\caption{Value of the measures $V^2_{t(f)}$ $(t =1, 2, 3)$ setting (a) power-divergence for any $\lambda$ and (b) $\theta$-divergence for any $\theta$.}
\label{rrxz}
\centering
\scalebox{0.9}{
\begin{tabular}{ccccccccccccc}
& \multicolumn{11}{c}{(a) $V^2_{t(f)}$ $(t =1, 2, 3)$ applying power-divergence} &  \\
\hline
& \multicolumn{11}{c}{the correlation coefficient $\rho$} & \\ \cline{2-12}
$\lambda$ & $-1.0$ & $-0.8$ & $-0.6$ & $-0.4$ & $-0.2$ & 0.0 & 0.2 & 0.4 & 0.6 & 0.8 & 1.0 \\ \hline
0.0 & 1.000 & 0.254 & 0.116 & 0.046 & 0.011 & 0.000 & 0.011 & 0.046 & 0.116 & 0.254 & 1.000 \\
0.2 & 1.000 & 0.255 & 0.118 & 0.048 & 0.011 & 0.000 & 0.011 & 0.048 & 0.118 & 0.255 & 1.000 \\
0.4 & 1.000 & 0.251 & 0.117 & 0.048 & 0.011 & 0.000 & 0.011 & 0.048 & 0.117 & 0.251 & 1.000 \\
0.6 & 1.000 & 0.244 & 0.114 & 0.046 & 0.011 & 0.000 & 0.011 & 0.046 & 0.114 & 0.244 & 1.000 \\
0.8 & 1.000 & 0.234 & 0.109 & 0.044 & 0.011 & 0.000 & 0.011 & 0.044 & 0.109 & 0.234 & 1.000 \\
1.0 & 1.000 & 0.224 & 0.103 & 0.042 & 0.010 & 0.000 & 0.010 & 0.042 & 0.103 & 0.224 & 1.000 \\  \hline
\end{tabular}
}

\scalebox{0.9}{
\begin{tabular}{ccccccccccccc}
\\ 
& \multicolumn{11}{c}{(b) $V^2_{t(f)}$ $(t =1, 2, 3)$ applying $\theta$-divergence } &  \\
\hline
& \multicolumn{11}{c}{the correlation coefficient $\rho$} & \\ \cline{2-12}
$\theta$ & $-1.0$ & $-0.8$ & $-0.6$ & $-0.4$ & $-0.2$ & $0.0$ & $0.2$ & $0.4$ & $0.6$ & $0.8$ & $1.0$ \\ \hline
0.0 & 1.000 & 0.224 & 0.103 & 0.042 & 0.010 & 0.000 & 0.010 & 0.042 & 0.103 & 0.224 & 1.000 \\
0.1 & 1.000 & 0.249 & 0.118 & 0.049 & 0.012 & 0.000 & 0.012 & 0.049 & 0.118 & 0.249 & 1.000 \\
0.3 & 1.000 & 0.278 & 0.134 & 0.055 & 0.013 & 0.000 & 0.013 & 0.055 & 0.134 & 0.278 & 1.000 \\
0.5 & 1.000 & 0.285 & 0.134 & 0.053 & 0.013 & 0.000 & 0.013 & 0.053 & 0.134 & 0.285 & 1.000 \\
0.7 & 1.000 & 0.266 & 0.115 & 0.042 & 0.010 & 0.000 & 0.010 & 0.042 & 0.115 & 0.266 & 1.000 \\
0.9 & 1.000 & 0.191 & 0.061 & 0.019 & 0.004 & 0.000 & 0.004 & 0.019 & 0.061 & 0.191 & 1.000 \\   \hline
\end{tabular}
}

\end{table}

We evaluate the performance of the approximate confidence interval for the proposed measures by the covering probability.
Suppose that $4 \times 4$ contingency tables with a sample size of $5000$ are generated by a multinomial random number based on the probability distribution in Table \ref{rrxy}.
The number of iterations is $100,000$.
This experiment enables us to evaluate whether the confidence interval for the proposed measures tends to change with the strength of the association in the contingency table.

Table \ref{rrxz_cov} shows the coverage probability of the approximate $95\%$ confidence interval for $V^2_{t(f)}$.
Note, when $\rho = \pm{1.0}$, the diagonal or antidiagonal component of the contingency table is non-zero and the probability of the other ($i, j $) cells is zero, as shown in Table \ref{rrxy}, so that $V^2_{t(f)} = 1$ as can be seen from Table \ref{rrxz}, and the confidence intervals are $[1,1]$.
Therefore, we can see that it is reasonable that the coverage probabilities are $1.000$ when $\rho = \pm{1.0}$ as shown in Table \ref{rrxz_cov}.
Another finding is that the coverage probabilities of the $95\%$ confidence interval for all $V^2_{t(f)}$, regardless of function or parameter, exceed $0.95$.
This indicates that when the sample size is sufficiently large, the performance of the confidence intervals is good regardless of the strength of association in contingency tables.

\begin{table}[hbtp]
\caption{The coverage probability of the $95\%$ confidence interval for the measures $V^2_{t(f)}$ $(t =1, 2, 3)$ setting (a) power-divergence for any $\lambda$ and (b) $\theta$-divergence for any $\theta$.}
\label{rrxz_cov}
\centering
\scalebox{0.9}{
\begin{tabular}{rcrrrrrrrrrrr}
& \multicolumn{11}{c}{(a) $V^2_{t(f)}$ $(t =1, 2, 3)$ applying power-divergence} &  \\
\hline
\multicolumn{1}{c}{\bfseries }&\multicolumn{1}{c}{\bfseries }&\multicolumn{11}{c}{the correlation coefficient $\rho$}\tabularnewline
\cline{3-13}
\multicolumn{1}{c}{$\lambda$}&\multicolumn{1}{c}{}&\multicolumn{1}{c}{$-1.0$}&\multicolumn{1}{c}{$-0.8$}&\multicolumn{1}{c}{$-0.6$}&\multicolumn{1}{c}{$-0.4$}&\multicolumn{1}{c}{$-0.2$}&\multicolumn{1}{c}{$0.0$}&\multicolumn{1}{c}{$0.2$}&\multicolumn{1}{c}{$0.4$}&\multicolumn{1}{c}{$0.6$}&\multicolumn{1}{c}{$0.8$}&\multicolumn{1}{c}{$1.0$}\tabularnewline
\hline
$0.0$&&$1.000$&$0.956$&$0.960$&$0.964$&$0.975$&$0.999$&$0.976$&$0.962$&$0.959$&$0.955$&$1.000$\tabularnewline
$0.2$&&$1.000$&$0.958$&$0.959$&$0.965$&$0.967$&$0.999$&$0.968$&$0.964$&$0.958$&$0.956$&$1.000$\tabularnewline
$0.4$&&$1.000$&$0.957$&$0.958$&$0.965$&$0.965$&$0.999$&$0.966$&$0.964$&$0.957$&$0.957$&$1.000$\tabularnewline
$0.6$&&$1.000$&$0.958$&$0.960$&$0.960$&$0.972$&$0.999$&$0.972$&$0.959$&$0.959$&$0.957$&$1.000$\tabularnewline
$0.8$&&$1.000$&$0.957$&$0.959$&$0.960$&$0.977$&$0.999$&$0.977$&$0.959$&$0.958$&$0.957$&$1.000$\tabularnewline
$1.0$&&$1.000$&$0.957$&$0.959$&$0.966$&$0.975$&$0.999$&$0.976$&$0.965$&$0.959$&$0.957$&$1.000$\tabularnewline
\hline
\end{tabular}
}

\scalebox{0.9}{
\begin{tabular}{rcrrrrrrrrrrr}
\\
& \multicolumn{11}{c}{(b) $V^2_{t(f)}$ $(t =1, 2, 3)$ applying $\theta$-divergence } &  \\
\hline
\multicolumn{1}{c}{\bfseries }&\multicolumn{1}{c}{\bfseries }&\multicolumn{11}{c}{the correlation coefficient $\rho$}\tabularnewline
\cline{3-13}
\multicolumn{1}{c}{$\theta$}&\multicolumn{1}{c}{}&\multicolumn{1}{c}{$-1.0$}&\multicolumn{1}{c}{$-0.8$}&\multicolumn{1}{c}{$-0.6$}&\multicolumn{1}{c}{$-0.4$}&\multicolumn{1}{c}{$-0.2$}&\multicolumn{1}{c}{$0.0$}&\multicolumn{1}{c}{$0.2$}&\multicolumn{1}{c}{$0.4$}&\multicolumn{1}{c}{$0.6$}&\multicolumn{1}{c}{$0.8$}&\multicolumn{1}{c}{$1.0$}\tabularnewline
\hline
$0.0$&&$1.000$&$0.956$&$0.955$&$0.960$&$0.967$&$0.987$&$0.966$&$0.959$&$0.954$&$0.955$&$1.000$\tabularnewline
$0.1$&&$1.000$&$0.956$&$0.955$&$0.959$&$0.968$&$0.982$&$0.969$&$0.958$&$0.955$&$0.954$&$1.000$\tabularnewline
$0.3$&&$1.000$&$0.954$&$0.954$&$0.954$&$0.956$&$0.979$&$0.956$&$0.954$&$0.954$&$0.953$&$1.000$\tabularnewline
$0.5$&&$1.000$&$0.952$&$0.955$&$0.953$&$0.967$&$0.981$&$0.967$&$0.953$&$0.953$&$0.953$&$1.000$\tabularnewline
$0.7$&&$1.000$&$0.953$&$0.953$&$0.956$&$0.973$&$0.988$&$0.974$&$0.956$&$0.953$&$0.952$&$1.000$\tabularnewline
$0.9$&&$1.000$&$0.928$&$0.958$&$0.971$&$0.987$&$1.000$&$0.988$&$0.970$&$0.959$&$0.928$&$1.000$\tabularnewline
\hline
\end{tabular}
}

\end{table}

\clearpage
\subsection*{Numerical study 2}
In numerical experiment 2, we investigate the relationship between the measures when the number of rows and columns is varied.
The artificial data is generated from a bivariate normal distribution with $ \rho = 0.4 $, and the number of rows and columns is increased to $4$, $8$, and $12$, respectively.
Table \ref{rc} is the result of the measures $V^2_{1(f)}$, $V^2_{2(f)}$, and $V^2_{3(f)}$ (especially $V^2_{H(f)}$ and $V^2_{G(f)}$). 
In Table \ref{rc}, as the number of columns increase, $\hat{V}^2_{1(f)}$ increases but $\hat{V}^2_{2(f)}$ decreases for each $\lambda$ and $\theta$. 
On the other hand, for each $\lambda$ and $\theta$, as the number of rows increases, $\hat{V}^2_{1(f)}$ decreases but $\hat{V}^2_{2(f)}$ increases.
The increase or decrease in these values is attributed to the increase in the number of explanatory variables, which makes it easier to capture the explanatory variables.
The measure $V^2_{3(f)}$ combines both measures $V^2_{1(f)}$ and $V^2_{2(f)}$ to the extent to which knowledge of the value of one variable can help us predict the value of the other variable. 
Therefore, for each $\lambda$ and $\theta$, as the number of rows or columns increases, $V^2_{H(f)}$ and $V^2_{G(f)}$ are less, and the values remain the same even if the number of rows and columns are interchanged.
In addition, closer inspection of Table \ref{rc} shows that the $\theta$-divergence type measures is less affected by the number of rows and columns than the power-divergence type measures.
Although this finding may be somewhat influenced by the discrete bivariate normal distribution, it may be better to use the $\theta$-divergence type meausres with $\theta \geq 0.5$ in a contingency table where one side of the number of rows and columns is large.

\begin{table}[hbtp]
\caption{Value of the measures $V^2_{t(f)}$ $(t =1, 2, 3)$ setting (a) power-divergence for any $\lambda$ and (b) $\theta$-divergence for any $\theta$ in some artificial data with $\rho = 0.4$.}
\label{rc}
\centering
\begin{multicols}{2}
\scalebox{0.90}{
\begin{tabular}{cccccc}
\multicolumn{6}{c}{(a) applying power-divergence}   \\
\hline
$\lambda$ & $r \times c$ & $V^2_{1(f)}$ & $V^2_{2(f)}$ & $V^2_{H(f)}$ & $V^2_{G(f)}$ \\ \hline
0.0 & $4 \times 4$ & 0.046 & 0.046 & 0.046 & 0.046  \\
      & $4 \times 8$ & 0.051 & 0.034 & 0.041 & 0.041 \\
      & $8 \times 4$ & 0.034 & 0.051 & 0.041 & 0.041 \\ 
      & $4 \times 12$ & 0.052 & 0.029 & 0.037 & 0.039 \\
      & $12 \times 4$ & 0.029 & 0.052 & 0.037 & 0.039 \\ \hline

0.2 & $4 \times 4$ & 0.048 & 0.048 & 0.048 & 0.048  \\
      & $4 \times 8$ & 0.052 & 0.032 & 0.040 & 0.041 \\
      & $8 \times 4$ & 0.034 & 0.052 & 0.040 & 0.041 \\ 
      & $4 \times 12$ & 0.054 & 0.027 & 0.035 & 0.038 \\
      & $12 \times 4$ & 0.027 & 0.054 & 0.035 & 0.038 \\ \hline

0.4 & $4 \times 4$ & 0.048 & 0.048 & 0.048 & 0.048  \\
      & $4 \times 8$ & 0.052 & 0.030 & 0.038 & 0.040 \\
      & $8 \times 4$ & 0.030 & 0.052 & 0.038 & 0.040 \\ 
      & $4 \times 12$ & 0.054 & 0.023 & 0.032 & 0.035 \\
      & $12 \times 4$ & 0.023 & 0.054 & 0.032 & 0.035 \\ \hline
	  
0.6 & $4 \times 4$ & 0.046 & 0.046 & 0.046 & 0.046  \\
      & $4 \times 8$ & 0.051 & 0.027 & 0.035 & 0.037 \\
      & $8 \times 4$ & 0.027 & 0.051 & 0.035 & 0.037 \\
      & $4 \times 12$ & 0.052 & 0.020 & 0.029 & 0.032 \\
      & $12 \times 4$ & 0.020 & 0.052 & 0.029 & 0.032 \\ \hline
      
0.8 & $4 \times 4$ & 0.044 & 0.044 & 0.044 & 0.044  \\
      & $4 \times 8$ & 0.049 & 0.023 & 0.032 & 0.034 \\
      & $8 \times 4$ & 0.023 & 0.049 & 0.032 & 0.034 \\ 
      & $4 \times 12$ & 0.050 & 0.016 & 0.025 & 0.029 \\
      & $12 \times 4$ & 0.016 & 0.050 & 0.025 & 0.029 \\ \hline

1.0 & $4 \times 4$ & 0.042 & 0.042 & 0.042 & 0.042  \\
      & $4 \times 8$ & 0.046 & 0.020 & 0.028 & 0.030 \\
      & $8 \times 4$ & 0.020 & 0.046 & 0.028 & 0.030 \\
      & $4 \times 12$ & 0.048 & 0.013 & 0.020 & 0.025 \\
      & $12 \times 4$ & 0.013 & 0.048 & 0.020 & 0.025 \\ \hline
\end{tabular}
}

\scalebox{0.90}{
\begin{tabular}{ccccccccccccc}
\multicolumn{6}{c}{(b) applying $\theta$-divergence}   \\
\hline
$\theta$ & $r \times c$ & $V^2_{1(f)}$ & $V^2_{2(f)}$ & $V^2_{H(f)}$ & $V^2_{G(f)}$ \\ \hline
0.0 & $4 \times 4$ & 0.042 & 0.042 & 0.042 & 0.042  \\
      & $4 \times 8$ & 0.046 & 0.020 & 0.028 & 0.030 \\
      & $8 \times 4$ & 0.020 & 0.046 & 0.028 & 0.030 \\
      & $4 \times 12$ & 0.048 & 0.013 & 0.020 & 0.025 \\
      & $12 \times 4$ & 0.013 & 0.048 & 0.020 & 0.025 \\ \hline

0.1 & $4 \times 4$ & 0.049 & 0.049 & 0.049 & 0.049  \\
      & $4 \times 8$ & 0.053 & 0.030 & 0.038 & 0.040 \\
      & $8 \times 4$ & 0.030 & 0.053 & 0.038 & 0.040 \\
      & $4 \times 12$ & 0.055 & 0.024 & 0.033 & 0.036 \\
      & $12 \times 4$ & 0.024 & 0.055 & 0.033 & 0.036 \\ \hline
	  
0.3 & $4 \times 4$ & 0.055 & 0.055 & 0.055 & 0.055  \\
      & $4 \times 8$ & 0.060 & 0.042 & 0.050 & 0.050 \\
      & $8 \times 4$ & 0.042 & 0.060 & 0.050 & 0.050 \\ 
      & $4 \times 12$ & 0.062 & 0.038 & 0.047 & 0.048 \\
      & $12 \times 4$ & 0.038 & 0.062 & 0.047 & 0.048 \\ \hline
      	  
0.5 & $4 \times 4$ & 0.053 & 0.053 & 0.053 & 0.053  \\
      & $4 \times 8$ & 0.059 & 0.045 & 0.051 & 0.051 \\
      & $8 \times 4$ & 0.045 & 0.059 & 0.051 & 0.051 \\ 
      & $4 \times 12$ & 0.060 & 0.042 & 0.050 & 0.050 \\
      & $12 \times 4$ & 0.042 & 0.060 & 0.050 & 0.050 \\ \hline

0.7 & $4 \times 4$ & 0.042 & 0.042 & 0.042 & 0.042  \\
      & $4 \times 8$ & 0.047 & 0.038 & 0.042 & 0.042 \\
      & $8 \times 4$ & 0.038 & 0.047 & 0.042 & 0.042 \\
      & $4 \times 12$ & 0.048 & 0.037 & 0.042 & 0.042 \\
      & $12 \times 4$ & 0.037 & 0.048 & 0.042 & 0.042 \\ \hline
	        
0.9 & $4 \times 4$ & 0.019 & 0.019 & 0.019 & 0.019  \\
      & $4 \times 8$ & 0.022 & 0.018 & 0.020 & 0.020 \\
      & $8 \times 4$ & 0.018 & 0.022 & 0.020 & 0.020 \\ 
      & $4 \times 12$ & 0.022 & 0.018 & 0.020 & 0.020 \\
      & $12 \times 8$ & 0.018 & 0.022 & 0.020 & 0.020 \\ \hline
\end{tabular}
}
\end{multicols}
\end{table}

We evaluate the performance of the approximate confidence interval for the proposed measures by the covering probability.
Suppose that, contingency tables with a sample size of 5000, varying the number of rows and columns, are generated by a multinomial random number based on the probability distribution obtained from a discrete bivariate normal distribution.
The number of iterations is $100,000$.
This experiment enables us to evaluate whether the influence of the number of rows and columns in the contingency table tends to change the confidence interval for the proposed measures.

Table \ref{rc_cov} shows the coverage probability of the approximate $95\%$ confidence interval for $V^2_{t(f)}$.
The important finding is that the coverage probabilities of the $95\%$ confidence interval for all $V^2_{t(f)}$, regardless of function or parameter, exceed $0.91$.
This indicates that when the sample size is sufficiently large, the performance of the confidence intervals is good regardless of the number of rows and columns in contingency tables.

\begin{table}[hbtp]
\caption{The coverage probability of the $95\%$ confidence interval for the measures $V^2_{t(f)}$ $(t =1, 2, 3)$ setting (a) power-divergence for any $\lambda$ and (b) $\theta$-divergence for any $\theta$ in some artificial data with $\rho = 0.4$.}
\label{rc_cov}
\centering
\begin{multicols}{2}
\scalebox{0.90}{
\begin{tabular}{cccccc}
\multicolumn{6}{c}{(a) applying power-divergence}   \\
\hline
$\lambda$ & $r \times c$ & $V^2_{1(f)}$ & $V^2_{2(f)}$ & $V^2_{H(f)}$ & $V^2_{G(f)}$ \\ \hline
0.0 & $4 \times 4$ & 0.962 & 0.962 & 0.962 & 0.962  \\
      & $4 \times 8$ & 0.959 & 0.964 & 0.964 & 0.948 \\
      & $8 \times 4$ & 0.965 & 0.959 & 0.965 & 0.949 \\ 
      & $4 \times 12$ & 0.938 & 0.950 & 0.934 & 0.949 \\
      & $12 \times 4$ & 0.949 & 0.938 & 0.935 & 0.948 \\ \hline

0.2 & $4 \times 4$ & 0.964 & 0.964 & 0.964 & 0.964  \\
      & $4 \times 8$ & 0.948 & 0.949 & 0.958 & 0.954 \\
      & $8 \times 4$ & 0.951 & 0.950 & 0.959 & 0.955 \\ 
      & $4 \times 12$ & 0.947 & 0.966 & 0.917 & 0.952 \\
      & $12 \times 4$ & 0.966 & 0.946 & 0.917 & 0.951 \\ \hline

0.4 & $4 \times 4$ & 0.964 & 0.964 & 0.964 & 0.964  \\
      & $4 \times 8$ & 0.948 & 0.966 & 0.958 & 0.965 \\
      & $8 \times 4$ & 0.966 & 0.950 & 0.959 & 0.966 \\ 
      & $4 \times 12$ & 0.947 & 0.934 & 0.920 & 0.940 \\
      & $12 \times 4$ & 0.934 & 0.946 & 0.920 & 0.928 \\ \hline
	  
0.6 & $4 \times 4$ & 0.959 & 0.959 & 0.959 & 0.959  \\
      & $4 \times 8$ & 0.952 & 0.971 & 0.959 & 0.961 \\
      & $8 \times 4$ & 0.972 & 0.953 & 0.959 & 0.961 \\
      & $4 \times 12$ & 0.925 & 0.972 & 0.962 & 0.940 \\
      & $12 \times 4$ & 0.972 & 0.925 & 0.962 & 0.939 \\ \hline
      
0.8 & $4 \times 4$ & 0.959 & 0.959 & 0.959 & 0.959  \\
      & $4 \times 8$ & 0.955 & 0.959 & 0.969 & 0.965 \\
      & $8 \times 4$ & 0.959 & 0.955 & 0.970 & 0.966 \\ 
      & $4 \times 12$ & 0.929 & 0.953 & 0.972 & 0.966 \\
      & $12 \times 4$ & 0.952 & 0.929 & 0.971 & 0.966 \\ \hline

1.0 & $4 \times 4$ & 0.965 & 0.965 & 0.965 & 0.965  \\
      & $4 \times 8$ & 0.949 & 0.976 & 0.969 & 0.954 \\
      & $8 \times 4$ & 0.977 & 0.951 & 0.970 & 0.955 \\
      & $4 \times 12$ & 0.951 & 0.977 & 0.935 & 0.962 \\
      & $12 \times 4$ & 0.977 & 0.951 & 0.935 & 0.962 \\ \hline
\end{tabular}
}

\scalebox{0.90}{
\begin{tabular}{ccccccccccccc}
\multicolumn{6}{c}{(b) applying $\theta$-divergence}   \\
\hline
$\theta$ & $r \times c$ & $V^2_{1(f)}$ & $V^2_{2(f)}$ & $V^2_{H(f)}$ & $V^2_{G(f)}$ \\ \hline
0.0 & $4 \times 4$ & 0.965 & 0.965 & 0.965 & 0.965  \\
      & $4 \times 8$ & 0.949 & 0.976 & 0.969 & 0.954 \\
      & $8 \times 4$ & 0.977 & 0.951 & 0.970 & 0.955 \\
      & $4 \times 12$ & 0.951 & 0.977 & 0.935 & 0.962 \\
      & $12 \times 4$ & 0.977 & 0.951 & 0.935 & 0.962 \\ \hline

0.1 & $4 \times 4$ & 0.964 & 0.964 & 0.964 & 0.964  \\
      & $4 \times 8$ & 0.948 & 0.966 & 0.952 & 0.960 \\
      & $8 \times 4$ & 0.967 & 0.950 & 0.953 & 0.961 \\
      & $4 \times 12$ & 0.947 & 0.952 & 0.927 & 0.934 \\
      & $12 \times 4$ & 0.952 & 0.946 & 0.927 & 0.962 \\ \hline
	  
0.3 & $4 \times 4$ & 0.960 & 0.960 & 0.960 & 0.960  \\
      & $4 \times 8$ & 0.947 & 0.953 & 0.960 & 0.945 \\
      & $8 \times 4$ & 0.954 & 0.948 & 0.961 & 0.947 \\ 
      & $4 \times 12$ & 0.942 & 0.941 & 0.936 & 0.922 \\
      & $12 \times 4$ & 0.941 & 0.942 & 0.936 & 0.922 \\ \hline
      	  
0.5 & $4 \times 4$ & 0.957 & 0.957 & 0.957 & 0.957  \\
      & $4 \times 8$ & 0.958 & 0.952 & 0.954 & 0.945 \\
      & $8 \times 4$ & 0.953 & 0.958 & 0.955 & 0.946 \\ 
      & $4 \times 12$ & 0.936 & 0.920 & 0.944 & 0.921 \\
      & $12 \times 4$ & 0.920 & 0.936 & 0.944 & 0.921 \\ \hline

0.7 & $4 \times 4$ & 0.960 & 0.960 & 0.960 & 0.960  \\
      & $4 \times 8$ & 0.954 & 0.950 & 0.951 & 0.946 \\
      & $8 \times 4$ & 0.951 & 0.955 & 0.953 & 0.947 \\
      & $4 \times 12$ & 0.928 & 0.943 & 0.943 & 0.930 \\
      & $12 \times 4$ & 0.943 & 0.927 & 0.943 & 0.930 \\ \hline
	        
0.9 & $4 \times 4$ & 0.975 & 0.975 & 0.975 & 0.975  \\
      & $4 \times 8$ & 0.974 & 0.968 & 0.975 & 0.974 \\
      & $8 \times 4$ & 0.969 & 0.974 & 0.975 & 0.974 \\ 
      & $4 \times 12$ & 0.945 & 0.965 & 0.964 & 0.960 \\
      & $12 \times 8$ & 0.966 & 0.945 & 0.965 & 0.961 \\ \hline
\end{tabular}
}
\end{multicols}
\end{table}

\clearpage
\section{Additional examples}
In this Appendix D, we show tables of the results of some actual data examples on measures by function and parameter. 
We use the Tomizawa's power-divergence type measures ($f(x)=(x^{\lambda+1}-x)/\lambda(\lambda+1)$ for $\lambda = 0.0, 0.6, 1.0, 1.2, 1.5$) and the newly proposed the $\theta$-divergence type measures ($f(x) = (x-1)^2/(\theta x + 1 - \theta) + (x-1)/(1 - \theta)$ for $\theta = 0.0, 0.3, 0.5, 0.7, 0.9$), both of which are a single-parameter divergence and extensions of the Cram\'er's coefficient $V^2$.
\subsection*{Example 1}
Consider the data in Table \ref{alcohol}, taken from \cite{andersen1994statistical}.
These are data from a Danish Welfare Study which describes the cross-classification of alcohol consumption and social rank. 
Alcohol consumption in the contingency table is grouped according to the number of "units" consumed per day. 
A unit is typically a beer, half a bottle of wine, or 2cl or 40$\%$ alcohol. 
This data can be assumed that row variables are response variables and column variables are explanatory variables.
By applying the measure $V^2_{1(f)}$, we consider to what degree the prediction of alcohol consumption can be improved when the social rank of an individual is known.
\begin{table}[htbp]
\caption{Association between alcohol consumption and social rank}
\centering
\label{alcohol}
\begin{tabular}{cccccc}
\hline
Units of alcohol & \multicolumn{4}{c}{Social rank group} & \\ \cline{2-5}
consumer per day & $\rm{I}-\rm{II}$ & $\rm{III}$ & $\rm{IV}$ & $\rm{V}$ & Total  \\ \hline
Under 1 & 98 & 338 & 484 & 484 & 1404  \\
1-2 & 235 & 406 & 588 & 385 & 1614 \\
More than 2 & 123 & 144 & 191 & 137 & 595 \\   \hline
Total  & 456 & 888 & 1263 & 1006  & 3613 \\ \hline
\multicolumn{2}{l}{Source: \cite{andersen1994statistical}}  \\
\end{tabular}
\end{table}

Table \ref{alcoholr} shows the estimates of the measure, standard errors, and confidence intervals. 
The confidence intervals for all $V^2_{1(f)}$ do not contain zero for all $\lambda$ and $\theta$. 
This shows that there is a rather weak association between alcohol alcohol consumption and social rank, including KL-divergence ($\lambda=0$), Pearson divergence ($\lambda=1, \theta=0$), triangular discrimination ($\theta=0.5$), and other perspectives.
Another important finding is that, for example, when $\lambda=1.0$ and $\theta = 0.0$, $\hat{V}^2_{1(f)}$ indicates that the strength of association between alcohol consumption and social rank is estimated to be 0.015 times the complete association. 
Hence, while predicting the value of alcohol consumption of an individual, we can predict it $1.5\%$ better than when we do not. 

\begin{table}[htbp]
\caption{Estimate of the measure $V^2_{1(f)}$, estimated approximate standard error for $\hat{V}^2_{1(f)}$, and approximate $95\%$ confidence interval of $V^2_{1(f)}$, applying (a) power-divergence for any $\lambda$ and (b) $\theta$-divergence for any $\theta$.}
\centering
\begin{multicols}{2}
\label{alcoholr}
\begin{tabular}{cccc}
\multicolumn{4}{c}{(a) $\hat{V}^2_{1(f)}$ applying power-divergence}   \\
\hline
$\lambda$ & $\hat{V}^2_{1(f)}$ & SE & $95\%$CI \\  \hline
0.0 & 0.015 & 0.003 & (0.010, 0.020) \\
0.2 & 0.016 & 0.003 & (0.010, 0.022) \\
0.4 & 0.016 & 0.003 & (0.010, 0.022) \\
0.6 & 0.016 & 0.003 & (0.010, 0.022) \\
0.8 & 0.016 & 0.003 & (0.010, 0.022) \\
1.0 & 0.015 & 0.003 & (0.010, 0.021) \\
1.2 & 0.014 & 0.003 & (0.009, 0.020) \\
1.5 & 0.013 & 0.002 & (0.008, 0.017) \\ \hline
\end{tabular}

\begin{tabular}{cccc}
\multicolumn{4}{c}{(b) $\hat{V}^2_{1(f)}$ applying $\theta$-divergence}  \\
\hline
$\theta$ & $\hat{V}^2_{1(f)}$ & SE & $95\%$CI \\  \hline
0.0 & 0.015 & 0.003 & (0.010, 0.021) \\
0.1 & 0.017 & 0.003 & (0.011, 0.024) \\
0.2 & 0.018 & 0.003 & (0.012, 0.025) \\
0.3 & 0.018 & 0.003 & (0.012, 0.025) \\
0.4 & 0.018 & 0.003 & (0.011, 0.024) \\
0.5 & 0.017 & 0.003 & (0.011, 0.023) \\
0.6 & 0.015 & 0.003 & (0.009, 0.020) \\
0.7 & 0.012 & 0.002 & (0.008, 0.017) \\
0.8 & 0.009 & 0.002 & (0.006, 0.013) \\
0.9 & 0.005 & 0.001 & (0.003, 0.007) \\ \hline
\end{tabular}

\end{multicols}
\end{table}

\subsection*{Example 2}
Consider the data in Table \ref{race}, taken from the 2006 General Social Survey.
These are data, which show the relationship between family income and education in the United States separately for black and white categories of race.
This data can be assumed that row variables are response variables and column variables are explanatory variables.
By applying the measures $V^2_{1(f)}$, we consider to what educational degree can be improved when the prediction of family income for black and white categories of an individual is known.

\begin{table}[hbtp]
\caption{Data on educational degrees and family income, by race}
\centering
\label{race}
\begin{tabular}{lcccc}
\\
\multicolumn{5}{c}{(a) Black Person} 
\\
\hline
 & \multicolumn{3}{c}{Income} & \\ \cline{2-4}
Degree  & Blow Average & Average  & Above Average & Total  \\ \hline
$<$ High school & 43 & 36 & 5 & 84  \\
High school, junior college & 104 & 140 & 23 & 267 \\
College, grad school & 16 & 30 & 18 & 64 \\ \hline
Total & 163 & 206 & 46  & 415 \\ \hline
\multicolumn{2}{l}{Source: 2006 General Social Survey}  \\
\end{tabular}

\begin{tabular}{lcccc}
\\
\multicolumn{5}{c}{(b) White Person}
\\
\hline
 & \multicolumn{3}{c}{Income} & \\ \cline{2-4}
Degree  & Blow Average & Average  & Above Average & Total  \\ \hline
$<$ High school & 114 & 97 & 12 & 223  \\
High school, junior college & 410 & 658 & 221 & 1289 \\
College, grad school & 97 & 259 & 287 & 643 \\ \hline
Total & 621 & 1014 & 520  & 2155 \\ \hline
\multicolumn{2}{l}{Source: 2006 General Social Survey}  \\
\end{tabular}
\end{table}

Table \ref{race_m} shows the estimates of the measures, standard errors, and $95\%$ confidence intervals. In addition, Tables \ref{race_m}(a1, b1) and \ref{race_m}(b1, b2) show the results of the analysis of Tables \ref{race}(a) and \ref{race}(b), respectively.
One interesting finding is the confidence intervals for all $V^2_{1(f)}$ do not contain zero for any $\lambda$ and any $\theta$. 
The results show that the two actual data have an associated structure from a point of view other than the Cram\'er's coefficient $V^2$.
Another important finding is the comparison of the confidence intervals.
For conventional power-divergence type measures, a comparison of Tables \ref{race_m}(a1) and \ref{race_m}(b1) shows that confidence intervals overlap for each $\lambda$.
On the other hand, when $\theta = 0.9$ in Tables \ref{race_m}(a2) and \ref{race_m}(b2), the confidence intervals do not overlap, and Table \ref{race}(a), where the estimate is closer to $0$, has higher independence.

\begin{table}[hbtp]
\caption{Estimate of the measure $V^2_{1(f)}$, estimated approximate standard error for $\hat{V}^2_{1(f)}$, and approximate $95\%$ confidence interval of $V^2_{1(f)}$ applying (a1, b1) power-divergence for any $\lambda$ and (a2, b2) $\theta$-divergence for any $\theta$.}
\label{race_m}
\centering
\begin{multicols}{2}
\begin{tabular}{cccc}
\multicolumn{4}{c}{(a1) applying power-divergence}   \\
\hline
$\lambda$ & $\hat{V}^2_{1(f)}$ & SE & $95\%$CI \\  \hline
0.0 & 0.032 & 0.014 & (0.005, 0.059) \\
0.2 & 0.034 & 0.015 & (0.005, 0.063) \\
0.4 & 0.035 & 0.016 & (0.005, 0.066) \\
0.6 & 0.036 & 0.016 & (0.005, 0.067) \\
0.8 & 0.035 & 0.016 & (0.004, 0.066) \\
1.0 & 0.034 & 0.016 & (0.003, 0.064) \\
1.2 & 0.032 & 0.015 & (0.002, 0.061) \\
1.5 & 0.028 & 0.014 & (0.001, 0.056) \\ \hline
\end{tabular}

\begin{tabular}{cccc}
\\ 
\multicolumn{4}{c}{(a2) applying $\theta$-divergence}  \\
\hline
$\theta$ & $\hat{V}^2_{1(f)}$ & SE & $95\%$CI \\  \hline
0.0 & 0.034 & 0.016 & (0.003, 0.064) \\
0.1 & 0.038 & 0.017 & (0.005, 0.072) \\
0.2 & 0.040 & 0.017 & (0.006, 0.074) \\
0.3 & 0.040 & 0.017 & (0.007, 0.072) \\
0.4 & 0.037 & 0.016 & (0.007, 0.068) \\
0.5 & 0.034 & 0.014 & (0.006, 0.062) \\
0.6 & 0.029 & 0.012 & (0.005, 0.054) \\
0.7 & 0.024 & 0.010 & (0.004, 0.044) \\
0.8 & 0.017 & 0.008 & (0.002, 0.032) \\
0.9 & 0.009 & 0.004 & (0.001, 0.017) \\ \hline
\end{tabular}

\begin{tabular}{cccc}
\multicolumn{4}{c}{(b1) applying power-divergence}   \\
\hline
$\lambda$ & $\hat{V}^2_{1(f)}$ & SE & $95\%$CI \\  \hline
0.0 & 0.068 & 0.008 & (0.052, 0.083) \\
0.2 & 0.071 & 0.008 & (0.055, 0.087) \\
0.4 & 0.072 & 0.008 & (0.055, 0.088) \\
0.6 & 0.070 & 0.008 & (0.054, 0.086) \\
0.8 & 0.067 & 0.008 & (0.051, 0.082) \\
1.0 & 0.062 & 0.007 & (0.047, 0.076) \\
1.2 & 0.056 & 0.007 & (0.042, 0.069) \\
1.5 & 0.046 & 0.006 & (0.034, 0.057) \\ \hline
\end{tabular}

\begin{tabular}{cccc}
\\
\multicolumn{4}{c}{(b2) applying $\theta$-divergence}  \\
\hline
$\theta$ & $\hat{V}^2_{1(f)}$ & SE & $95\%$CI \\  \hline
0.0 & 0.062 & 0.007 & (0.047, 0.076) \\
0.1 & 0.076 & 0.009 & (0.059, 0.094) \\
0.2 & 0.082 & 0.010 & (0.064, 0.101) \\
0.3 & 0.084 & 0.010 & (0.065, 0.102) \\
0.4 & 0.081 & 0.009 & (0.063, 0.100) \\
0.5 & 0.076 & 0.009 & (0.059, 0.094) \\
0.6 & 0.069 & 0.008 & (0.053, 0.085) \\
0.7 & 0.058 & 0.007 & (0.044, 0.072) \\
0.8 & 0.044 & 0.006 & (0.032, 0.056) \\
0.9 & 0.026 & 0.004 & (0.018, 0.033) \\ \hline
\end{tabular}
\end{multicols}
\end{table}

\clearpage
\subsection*{Example 3}
Consider the data in Table \ref{car} taken from \cite{read1988goodness}. 
These are data on 4831 car accidents, which are cross-classified according to accident type and accident severity. 
This data can be assumed that row variables are explanatory variables and column variables are response variables.
By applying the measures $V^2_{2(f)}$, we consider the degree to which prediction of the accident severity can be improved when the accident type is known.

\begin{table}[htbp]
\caption{The 4831 car accidents}
\centering
\label{car}
\begin{tabular}{ccccc}
\hline
 & \multicolumn{3}{c}{Accident severity} & \\ \cline{2-4}
Accident & Not & Moderately & &   \\ 
type & severe & severe & Severe & Total  \\ \hline
Rollover & 2365 & 944 & 412 & 3721 \\
Not rollover & 249 & 585 & 276 & 1110 \\   \hline
Total  & 2614 & 1529 & 688 & 4831 \\ \hline
\multicolumn{2}{l}{Source: \cite{read1988goodness}}  \\
\end{tabular}
\end{table}

Table \ref{carr} gives the estimates of the measures, standard errors, and confidence intervals. Likewise, the confidence intervals for all $V^2_{2(f)}$ do not contain zero for any $\lambda$ and $\theta$. 
This shows that there is a weak association between accident type and accident severity, including KL-divergence ($\lambda=0$), Pearson divergence ($\lambda=1, \theta=0$), triangular discrimination ($\theta=0.5$), and other perspectives.
Another important finding is that, for instance, when $\lambda=1.0$ and $\theta = 0.0$, $\hat{V}^2_{2(f)}$ indicates that the strength of the association between accident type and accident severity is estimated to be 0.060 times the complete association. 
Therefore, while predicting the value of accident severity of an individual, we can predict it $6.0\%$ better than when we do not. 

\begin{table}[htbp]
\caption{Estimate of the measure $V^2_{2(f)}$, estimated approximate standard error for $\hat{V}^2_{2(f)}$, and approximate $95\%$ confidence interval of $V^2_{2(f)}$, applying (a) power-divergence for any $\lambda$ and (b) $\theta$-divergence for any $\theta$.}
\centering
\begin{multicols}{2}
\label{carr}
\begin{tabular}{cccc}
\multicolumn{4}{c}{(a) applying power-divergence}   \\
\hline
$\lambda$ & $\hat{V}^2_{2(f)}$ & SE & $95\%$CI \\  \hline
0.0 & 0.064 & 0.005 & (0.054, 0.074) \\
0.2 & 0.067 & 0.005 & (0.057, 0.077) \\
0.4 & 0.068 & 0.005 & (0.058, 0.078) \\
0.6 & 0.067 & 0.005 & (0.057, 0.077) \\
0.8 & 0.064 & 0.005 & (0.055, 0.074) \\
1.0 & 0.060 & 0.004 & (0.052, 0.069) \\
1.2 & 0.056 & 0.004 & (0.048, 0.064) \\
1.5 & 0.048 & 0.004 & (0.041, 0.055) \\ \hline
\end{tabular}

\begin{tabular}{cccc}
\multicolumn{4}{c}{(b) applying $\theta$-divergence}  \\
\hline
$\theta$ & $\hat{V}^2_{2(f)}$ & SE & $95\%$CI \\  \hline
0.0 & 0.060 & 0.004 & (0.052, 0.069) \\
0.1 & 0.071 & 0.005 & (0.061, 0.082) \\
0.2 & 0.077 & 0.006 & (0.065, 0.088) \\
0.3 & 0.078 & 0.006 & (0.067, 0.090) \\
0.4 & 0.077 & 0.006 & (0.065, 0.089) \\
0.5 & 0.073 & 0.006 & (0.062, 0.084) \\
0.6 & 0.066 & 0.005 & (0.055, 0.076) \\
0.7 & 0.056 & 0.005 & (0.047, 0.065) \\
0.8 & 0.042 & 0.004 & (0.035, 0.050) \\
0.9 & 0.024 & 0.002 & (0.020, 0.028) \\ \hline
\end{tabular}

\end{multicols}
\end{table}

\subsection*{Example 4}
Consider the data in Table \ref{eyes_UK}, taken from \cite{stuart1953estimation}. These tables are naked eye acuity data for UK men and women. In Table \ref{eyes_UK}, the row and column variables are the right and left eye grades,
respectively, with the categories ordered from the highest grade (1) to the lowest grade (4). 
These tables cannot distinguish whether the row and column variables are explanatory or response variables, so we apply the measure $V^2_{H(f)}$.
\begin{table}[hbtp]
\caption{Naked eye acuity data for UK men and women}
\centering
\label{eyes_UK}
\begin{tabular}{lccccc}
\multicolumn{6}{c}{(a) Women} 
\\
\hline
 & \multicolumn{4}{c}{Left eye grade} & \\ \cline{2-5}
Right eye grade & (1) & (2) & (3) & (4) & Total  \\ \hline
Best (1) & 1520 & 266 & 124 & 66 & 1976  \\
Second (2) & 234 & 1512 & 432 & 78 & 2256 \\
Third (3) & 117 & 362 & 1772 & 205 & 2456 \\
Worst (4)  & 36 & 82 & 179 & 492 & 789 \\   \hline
Total & 1907 & 2222 & 2507 & 841  & 7477 \\ \hline
\multicolumn{2}{l}{Source: \cite{stuart1953estimation}}  \\
\end{tabular}

\begin{tabular}{lccccc}
\\
\multicolumn{6}{c}{(b) Men} 
\\
\hline
 & \multicolumn{4}{c}{Left eye grade} & \\ \cline{2-5}
Right eye grade & (1) & (2) & (3) & (4) & Total  \\ \hline
Best (1) & 821 & 112 & 85 & 35 & 1053  \\
Second (2) & 116 & 494 & 145 & 27 & 782 \\
Third (3) & 72 & 151 & 583 & 87 & 893 \\
Worst (4)  & 43 & 34 & 106 & 331 & 514 \\   \hline
Total & 1052 & 791 & 919 & 480  & 3242 \\ \hline
\multicolumn{2}{l}{Source: \cite{stuart1953estimation}}  \\
\end{tabular}
\end{table}

Table \ref{eyesr_UK} gives the estimates of the measures, standard errors, and confidence intervals.
In addition, Tables \ref{eyesr_UK}(a1, b1) and \ref{eyesr_UK}(b1, b2) show the results of the analysis of Tables \ref{eyes_UK}(a) and \ref{eyes_UK}(b), respective.
The results of this analysis show that the two actual data have the strong structure of association in terms of the estimates and confidence intervals for all $\lambda$ and $\theta$.
As for the comparing values of the measures between Tables \ref{eyesr_UK}(a1, b1) and \ref{eyesr_UK}(b1, b2), it can be said that the strength of association between the right and the left eye is greater for women in term of the estimates for each parameter.
Comparing confidence intervals between Tables \ref{eyesr_UK}(a1, b1) and \ref{eyesr_UK}(b1, b2) can be similarly concluded, we can see that the confidence intervals overlap for all $\lambda$ and $\theta$.
These results show that the association between right and left naked eye acuity is stronger for women than for men based on the estimate of the measure, but the confidence intervals do not allow us to conclude that there is a difference in the strength of the association.

\begin{table}[hbtp]
\caption{Estimate of the measure $V^2_{H(f)}$, estimated approximate standard error for $\hat{V}^2_{H(f)}$, and approximate $95\%$ confidence interval of $V^2_{H(f)}$ applying (a1, b1) power-divergence for any $\lambda$ and (b1, b2) $\theta$-divergence for any $\theta$.}
\label{eyesr_UK}
\centering
\begin{multicols}{2}
\begin{tabular}{cccc}
\multicolumn{4}{c}{(a1) applying power-divergence}   \\
\hline
$\lambda$ & $\hat{V}^2_{H(f)}$ & SE & $95\%$CI \\  \hline
0.0 & 0.338 & 0.008 & (0.324, 0.353) \\
0.2 & 0.358 & 0.008 & (0.343, 0.373) \\
0.4 & 0.369 & 0.008 & (0.353, 0.385) \\
0.6 & 0.372 & 0.008 & (0.355, 0.388) \\
0.8 & 0.369 & 0.009 & (0.352, 0.386) \\
1.0 & 0.361 & 0.009 & (0.344, 0.378) \\
1.2 & 0.349 & 0.009 & (0.332, 0.366) \\
1.5 & 0.325 & 0.009 & (0.307, 0.343) \\ \hline
\end{tabular}

\begin{tabular}{cccc}
\\ 
\multicolumn{4}{c}{(a2) applying $\theta$-divergence}  \\
\hline
$\theta$ & $\hat{V}^2_{H(f)}$ & SE & $95\%$CI \\  \hline
0.0 & 0.361 & 0.009 & (0.344, 0.378) \\
0.1 & 0.390 & 0.008 & (0.374, 0.407) \\
0.2 & 0.397 & 0.008 & (0.381, 0.414) \\
0.3 & 0.393 & 0.008 & (0.377, 0.409) \\
0.4 & 0.380 & 0.008 & (0.365, 0.396) \\
0.5 & 0.359 & 0.008 & (0.344, 0.374) \\
0.6 & 0.328 & 0.008 & (0.313, 0.343) \\
0.7 & 0.285 & 0.007 & (0.271, 0.299) \\
0.8 & 0.226 & 0.007 & (0.213, 0.239) \\
0.9 & 0.139 & 0.005 & (0.129, 0.149) \\ \hline
\end{tabular}

\begin{tabular}{cccc}
\multicolumn{4}{c}{(b1) applying power-divergence}   \\
\hline
$\lambda$ & $\hat{V}^2_{H(f)}$ & SE & $95\%$CI \\  \hline
0.0 & 0.317 & 0.011 & (0.296, 0.339) \\
0.2 & 0.335 & 0.012 & (0.313, 0.358) \\
0.4 & 0.345 & 0.012 & (0.322, 0.369) \\
0.6 & 0.348 & 0.012 & (0.325, 0.372) \\
0.8 & 0.346 & 0.012 & (0.322, 0.370) \\
1.0 & 0.340 & 0.012 & (0.315, 0.364) \\
1.2 & 0.330 & 0.013 & (0.306, 0.355) \\
1.5 & 0.311 & 0.013 & (0.287, 0.336) \\ \hline
\end{tabular}

\begin{tabular}{cccc}
\\
\multicolumn{4}{c}{(2b) applying $\theta$-divergence}  \\
\hline
$\theta$ & $\hat{V}^2_{H(f)}$ & SE & $95\%$CI \\  \hline
0.0 & 0.340 & 0.012 & (0.315, 0.364) \\
0.1 & 0.364 & 0.012 & (0.340, 0.388) \\
0.2 & 0.372 & 0.012 & (0.348, 0.396) \\
0.3 & 0.370 & 0.012 & (0.346, 0.394) \\
0.4 & 0.359 & 0.012 & (0.335, 0.382) \\
0.5 & 0.339 & 0.012 & (0.316, 0.361) \\
0.6 & 0.309 & 0.011 & (0.287, 0.331) \\
0.7 & 0.268 & 0.011 & (0.247, 0.288) \\
0.8 & 0.210 & 0.009 & (0.191, 0.228) \\
0.9 & 0.127 & 0.007 & (0.114, 0.140) \\ \hline
\end{tabular}
\end{multicols}
\end{table}

\subsection*{Example 5}
Consider the data in Table \ref{eyes}, taken from \cite{tomizawa1985analysis}. 
These tables provide information on the unaided distance vision of 4746 university students aged 18 to about 25 and 3168 elementary students aged 6 to about 12. 
In Table \ref{eyes}, the row and column variables are the right and left eye grades, respectively, with the categories ordered from the highest grade (1) to the lowest grade (4).
These tables cannot distinguish whether the row and column variables are explanatory or response variables, so we apply the measure $V^2_{H(f)}$.

\begin{table}[hbtp]
\caption{Unaided distance vision data for university and elementary students}
\centering
\label{eyes}
\begin{tabular}{lccccc}
\multicolumn{6}{c}{(a) University Students} \\
\hline
 & \multicolumn{4}{c}{Left eye grade} & \\ \cline{2-5}
Right eye grade & (1) & (2) & (3) & (4) & Total  \\ \hline
Highest (1) & 1291 & 130 & 40 & 22 & 1483  \\
Second (2) & 149 & 221 & 114 & 23 & 507 \\
Third (3) & 64 & 124 & 660 & 185 & 1033 \\
Lowest (4)  & 20 & 25 & 249 & 1492 & 1723 \\   \hline
Total & 1524 & 500 & 1063 & 1659  & 4746 \\ \hline
\multicolumn{2}{l}{Source: \cite{tomizawa1985analysis}}  \\
\end{tabular}

\begin{tabular}{lccccc}
\\
\multicolumn{6}{c}{(b) Elementary Students} 
\\
\hline
 & \multicolumn{4}{c}{Left eye grade} & \\ \cline{2-5}
Right eye grade & (1) & (2) & (3) & (4) & Total  \\ \hline
Highest (1) & 2470 & 126 & 21 & 10 & 2627  \\
Second (2) & 96 & 138 & 33 & 5 & 272 \\
Third (3) & 10 & 42 & 75 & 15 & 142 \\
Lowest (4)  & 12 & 7 & 16 & 92 & 127 \\   \hline
Total & 2588 & 313 & 145 & 122  & 3168 \\ \hline
\multicolumn{2}{l}{Source: \cite{tomizawa1985analysis}}  \\
\end{tabular}
\end{table}

Table \ref{eyesr} gives the estimates of the measures, standard errors, and confidence intervals.
In addition, Tables \ref{eyesr}(a1, b1) and \ref{eyesr}(b1, b2) show the results of the analysis of Tables \ref{eyes}(a) and \ref{eyes}(b), respective.
The results of this analysis show that the two actual data have the strong structure of association in terms of the estimates and confidence intervals for all $\lambda$ and $\theta$.
As for the comparing value of the measures between Tables \ref{eyesr}(a1, b1) and \ref{eyesr}(b1, b2), it can be said that the strength of association between the right and the left eye is greater for elementary school students in term of the estimates for each parameter.
Comparing confidence intervals between Tables \ref{eyesr}(a1, b1) and \ref{eyesr}(b1, b2) can be similarly concluded, we can confirm that the confidence intervals for each $\lambda$ overlap, but not when $\theta = 0.5, 0.6, 0.7, 0.8, 0.9$.
Another interesting finding is that, unlike Example $2$, the results of the analysis in Example $5$ do not overlap the confidence intervals when $\theta = 0.5, 0.6, 0.7, 0.8$.
Including in terms of the triangular discrimination $\Delta$ ($\theta = 0.5$), which is not observed in the $V^2$ or the power-divergence type measures, it can be assumed that this result provides evidence that Table \ref{eyes}(b) has a stronger association structure.

\begin{table}[hbtp]
\caption{Estimate of the measure $V^2_{H(f)}$, estimated approximate standard error for $\hat{V}^2_{H(f)}$, and approximate $95\%$ confidence interval of $V^2_{H(f)}$ applying (a1, b1) power-divergence for any $\lambda$ and (a2, b2) $\theta$-divergence for any $\theta$.}
\label{eyesr}
\centering
\begin{multicols}{2}
\begin{tabular}{cccc}
\multicolumn{4}{c}{(a1) applying power-divergence}   \\
\hline
$\lambda$ & $\hat{V}^2_{H(f)}$ &  SE & $95\%$CI \\  \hline
0.0 & 0.413 & 0.017 & (0.379, 0.446) \\
0.2 & 0.420 & 0.017 & (0.386, 0.454) \\
0.4 & 0.415 & 0.018 & (0.380, 0.449) \\
0.6 & 0.400 & 0.018 & (0.364, 0.436) \\
0.8 & 0.380 & 0.019 & (0.342, 0.418) \\
1.0 & 0.357 & 0.020 & (0.317, 0.397) \\
1.2 & 0.334 & 0.022 & (0.291, 0.376) \\
1.5 & 0.300 & 0.023 & (0.255, 0.346) \\ \hline
\end{tabular}

\begin{tabular}{cccc}
\\ 
\multicolumn{4}{c}{(a2) applying $\theta$-divergence}  \\
\hline
$\theta$ & $\hat{V}^2_{H(f)}$ &  SE & $95\%$CI \\  \hline
0.0 & 0.357 & 0.020 & (0.317, 0.397) \\
0.1 & 0.448 & 0.017 & (0.414, 0.482) \\
0.2 & 0.474 & 0.017 & (0.440, 0.508) \\
0.3 & 0.479 & 0.018 & (0.444, 0.514) \\
0.4 & 0.471 & 0.018 & (0.435, 0.507) \\
0.5 & 0.453 & 0.019 & (0.416, 0.490) \\
0.6 & 0.425 & 0.019 & (0.387, 0.463) \\
0.7 & 0.384 & 0.020 & (0.345, 0.423) \\
0.8 & 0.323 & 0.020 & (0.283, 0.363) \\
0.9 & 0.223 & 0.019 & (0.185, 0.261) \\ \hline
\end{tabular}

\begin{tabular}{cccc}
\multicolumn{4}{c}{(b1) applying power-divergence}   \\
\hline
$\lambda$ & $\hat{V}^2_{H(f)}$ &  SE & $95\%$CI \\  \hline
0.0 & 0.461 & 0.009 & (0.443, 0.479) \\
0.2 & 0.465 & 0.009 & (0.447, 0.483) \\
0.4 & 0.459 & 0.009 & (0.441, 0.476) \\
0.6 & 0.445 & 0.009 & (0.428, 0.463) \\
0.8 & 0.426 & 0.009 & (0.408, 0.443) \\
1.0 & 0.402 & 0.009 & (0.384, 0.420) \\
1.2 & 0.375 & 0.009 & (0.357, 0.392) \\
1.5 & 0.330 & 0.009 & (0.312, 0.348) \\ \hline
\end{tabular}

\begin{tabular}{cccc}
\\
\multicolumn{4}{c}{(b2) applying $\theta$-divergence}  \\
\hline
$\theta$ & $\hat{V}^2_{H(f)}$ & SE & $95\%$CI \\  \hline
0.0 & 0.402 & 0.009 & (0.384, 0.420) \\
0.1 & 0.465 & 0.009 & (0.446, 0.483) \\
0.2 & 0.495 & 0.009 & (0.477, 0.513) \\
0.3 & 0.509 & 0.009 & (0.492, 0.527) \\
0.4 & 0.514 & 0.009 & (0.496, 0.532) \\
0.5 & 0.510 & 0.009 & (0.492, 0.528) \\
0.6 & 0.497 & 0.010 & (0.478, 0.515) \\
0.7 & 0.472 & 0.010 & (0.452, 0.492) \\
0.8 & 0.427 & 0.011 & (0.405, 0.449) \\
0.9 & 0.337 & 0.013 & (0.311, 0.363) \\ \hline
\end{tabular}
\end{multicols}
\end{table}

\clearpage
\bibliographystyle{apalike} 
\bibliography{sn-bibliography.bib}

\begin{thebibliography}{}

\bibitem[Agresti, 2003]{agresti2003categorical}
Agresti, A. (2003).
\newblock {\em Categorical data analysis}.
\newblock John Wiley \& Sons.

\bibitem[Ali and Silvey, 1966]{ali1966general}
Ali, S.~M. and Silvey, S.~D. (1966).
\newblock A general class of coefficients of divergence of one distribution
  from another.
\newblock {\em Journal of the Royal Statistical Society: Series B
  (Methodological)}, 28(1):131--142.

\bibitem[Andersen, 1994]{andersen1994statistical}
Andersen, E.~B. (1994).
\newblock {\em The statistical analysis of categorical data}.
\newblock Springer Science \& Business Media.

\bibitem[Becker, 1989]{becker1989bivariate}
Becker, M.~P. (1989).
\newblock On the bivariate normal distribution and association models for
  ordinal categorical data.
\newblock {\em Statistics \& Probability Letters}, 8(5):435--440.

\bibitem[Bishop et~al., 2007]{bishop2007discrete}
Bishop, Y.~M., Fienberg, S.~E., and Holland, P.~W. (2007).
\newblock {\em Discrete multivariate analysis: theory and practice}.
\newblock Springer Science \& Business Media.

\bibitem[Cencov, 2000]{cencov2000statistical}
Cencov, N.~N. (2000).
\newblock {\em Statistical decision rules and optimal inference}.
\newblock Number~53. American Mathematical Soc.

\bibitem[Corcuera and Giummol\'e, 1998]{corcuera1998characterization}
Corcuera, J.~M. and Giummol\'e, F. (1998).
\newblock A characterization of monotone and regular divergences.
\newblock {\em Annals of the Institute of Statistical Mathematics},
  50(3):433--450.

\bibitem[Corcuera and Giummol{\`e}, 1999a]{corcuera1999generalized}
Corcuera, J.~M. and Giummol{\`e}, F. (1999a).
\newblock A generalized bayes rule for prediction.
\newblock {\em Scandinavian Journal of Statistics}, 26(2):265--279.

\bibitem[Corcuera and Giummol{\`e}, 1999b]{corcuera1999relationship}
Corcuera, J.~M. and Giummol{\`e}, F. (1999b).
\newblock On the relationship between $\alpha$ connections and the asymptotic
  properties of predictive distributions.
\newblock {\em Bernoulli}, pages 163--176.

\bibitem[Cram{\'e}r, 1946]{cramer1946mathematical}
Cram{\'e}r, H. (1946).
\newblock {\em Mathematical Methods of Statistics}.
\newblock Princeton university press.

\bibitem[Cressie and Read, 1984]{cressie1984multinomial}
Cressie, N. and Read, T.~R. (1984).
\newblock Multinomial goodness-of-fit tests.
\newblock {\em Journal of the Royal Statistical Society: Series B
  (Methodological)}, 46(3):440--464.

\bibitem[Csisz\'ar, 1963]{csiszar1963ene}
Csisz\'ar, I. (1963).
\newblock Eine informationstheoretische ungleichung und ihre anwendung auf den
  beweis der ergodizit\"at von markhoffschen ketten.
\newblock {\em Publ. Math. Inst. Hungarian Acad. Sci.}, 8:85--108.

\bibitem[Csisz{\'a}r and Shields, 2004]{csiszar2004information}
Csisz{\'a}r, I. and Shields, P.~C. (2004).
\newblock {\em Information theory and statistics: A tutorial}.
\newblock Now Publishers Inc.

\bibitem[Divgi, 1979]{divgi1979calculation}
Divgi, D.~R. (1979).
\newblock Calculation of the tetrachoric correlation coefficient.
\newblock {\em Psychometrika}, 44(2):169--172.

\bibitem[Dragomir et~al., 2000]{dragomir2000new}
Dragomir, S., Sunde, J., Buse, C., et~al. (2000).
\newblock New inequalities for jeffreys divergence measure.
\newblock {\em Tamsui Oxford Journal of Mathematical Sciences}, 16(2).

\bibitem[Everitt, 1992]{everitt1992analysis}
Everitt, B.~S. (1992).
\newblock {\em The analysis of contingency tables}.
\newblock CRC Press.

\bibitem[Felipe et~al., 2014]{felipe2014phi}
Felipe, A., Mart{\'\i}n, N., Miranda, P., and Pardo, L. (2014).
\newblock Phi-divergence test statistics for testing the validity of latent
  class models for binary data.
\newblock {\em arXiv preprint arXiv:1407.2165}.

\bibitem[Felipe et~al., 2018]{felipe2018statistical}
Felipe, A., Mart{\'\i}n, N., Miranda, P., and Pardo, L. (2018).
\newblock Statistical inference in constrained latent class models for
  multinomial data based on $\phi$-divergence measures.
\newblock {\em Advances in Data Analysis and Classification}, 12(3):605--636.

\bibitem[Fujisawa and Tahata, 2020]{fujisawa2020asymmetry}
Fujisawa, K. and Tahata, K. (2020).
\newblock Asymmetry model based on f-divergence and orthogonal decomposition of
  symmetry for square contingency tables with ordinal categories.
\newblock {\em SUT Journal of Mathematics}, 56(1):39--53.

\bibitem[Geisser, 1993]{gkisser1993predictive}
Geisser, S. (1993).
\newblock {\em Predictive inference: an introduction}.
\newblock Chapman and Hall/CRC.

\bibitem[Goodman, 1981]{goodman1981association}
Goodman, L.~A. (1981).
\newblock Association models and the bivariate normal for contingency tables
  with ordered categories.
\newblock {\em Biometrika}, 68(2):347--355.

\bibitem[Goodman, 1985]{goodman1985analysis}
Goodman, L.~A. (1985).
\newblock The analysis of cross-classified data having ordered and/or unordered
  categories: Association models, correlation models, and asymmetry models for
  contingency tables with or without missing entries.
\newblock {\em The Annals of Statistics}, pages 10--69.

\bibitem[Ichimori, 2013]{ichimori2013inequalities}
Ichimori, T. (2013).
\newblock On inequalities between $f$-divergence.
\newblock {\em Technical Note, {IPSJ} Journal}, 54(11):2344--2348.
\newblock in Japanese.

\bibitem[Kateri and Agresti, 2007]{kateri2007class}
Kateri, M. and Agresti, A. (2007).
\newblock A class of ordinal quasi-symmetry models for square contingency
  tables.
\newblock {\em Statistics \& Probability Letters}, 77(6):598--603.

\bibitem[Kateri and Papaioannou, 1994]{kateri1994f}
Kateri, M. and Papaioannou, T. (1994).
\newblock {\em f-divergence Association Models}.
\newblock University of Ioannina.

\bibitem[Kateri and Papaioannou, 1997]{kateri1997asymmetry}
Kateri, M. and Papaioannou, T. (1997).
\newblock Asymmetry models for contingency tables.
\newblock {\em Journal of the American Statistical Association},
  92(439):1124--1131.

\bibitem[Kirk, 1973]{kirk1973numerical}
Kirk, D.~B. (1973).
\newblock On the numerical approximation of the bivariate normal (tetrachoric)
  correlation coefficient.
\newblock {\em Psychometrika}, 38(2):259--268.

\bibitem[Kv{\aa}lseth, 2018]{kvaalseth2018alternative}
Kv{\aa}lseth, T.~O. (2018).
\newblock An alternative to cram{\'e}r's coefficient of association.
\newblock {\em Communications in Statistics-Theory and Methods},
  47(23):5662--5674.

\bibitem[Lancaster and Hamdan, 1964]{lancaster1964estimation}
Lancaster, H. and Hamdan, M. (1964).
\newblock Estimation of the correlation coefficient in contingency tables with
  possibly nonmetrical characters.
\newblock {\em Psychometrika}, 29(4):383--391.

\bibitem[Miyamoto et~al., 2007]{miyamoto2007generalized}
Miyamoto, N., Tamura, T., and Tomizawa, S. (2007).
\newblock Generalized measure of association for contingency tables.
\newblock {\em JP Journal of Biostatistics}, 1(1):25--37.

\bibitem[Momozaki et~al., 2023]{momozaki2022extension}
Momozaki, T., Wada, Y., Nakagawa, T., and Tomizawa, S. (2023).
\newblock Extension of generalized proportional reduction in variation measure
  for two-way contingency tables.
\newblock {\em Behaviormetrika}, 50(1):385--398.

\bibitem[Pardo, 2018]{pardo2018statistical}
Pardo, L. (2018).
\newblock {\em Statistical inference based on divergence measures}.
\newblock Chapman and Hall/CRC.

\bibitem[Read and Cressie, 1988]{read1988goodness}
Read, T.~R. and Cressie, N.~A. (1988).
\newblock {\em Goodness-of-fit statistics for discrete multivariate data}.
\newblock Springer Science \& Business Media.

\bibitem[R{\'e}nyi, 1961]{renyi1961measures}
R{\'e}nyi, A. (1961).
\newblock On measures of entropy and information.
\newblock In {\em Proceedings of the fourth Berkeley symposium on mathematical
  statistics and probability}, volume~1. Berkeley, California, USA.

\bibitem[Sason and Verd\'u, 2016]{Sason2016divergence}
Sason, I. and Verd\'u, S. (2016).
\newblock $f$ -divergence inequalities.
\newblock {\em IEEE Transactions on Information Theory}, 62(11):5973--6006.

\bibitem[Stuart, 1953]{stuart1953estimation}
Stuart, A. (1953).
\newblock The estimation and comparison of strengths of association in
  contingency tables.
\newblock {\em Biometrika}, 40(1/2):105--110.

\bibitem[Tahata, 2022]{tahata2022advances}
Tahata, K. (2022).
\newblock Advances in quasi-symmetry for square contingency tables.
\newblock {\em Symmetry}, 14(5):1051.

\bibitem[Tallis, 1962]{tallis1962maximum}
Tallis, G. (1962).
\newblock The maximum likelihood estimation of correlation from contingency
  tables.
\newblock {\em Biometrics}, 18(3):342--353.

\bibitem[Theil, 1970]{theil1970estimation}
Theil, H. (1970).
\newblock On the estimation of relationships involving qualitative variables.
\newblock {\em American Journal of Sociology}, 76(1):103--154.

\bibitem[Tomizawa, 1985]{tomizawa1985analysis}
Tomizawa, S. (1985).
\newblock Analysis of data in square contingency tables with ordered categories
  using the conditional symmetry model and its decomposed models.
\newblock {\em Environmental Health Perspectives}, 63:235--239.

\bibitem[Tomizawa et~al., 2004]{tomizawa2004generalization}
Tomizawa, S., Miyamoto, N., and Houya, H. (2004).
\newblock Generalization of cramer's coefficient of association for contingency
  tables: theory and methods.
\newblock {\em South African Statistical Journal}, 38(1):1--24.

\bibitem[Topsoe, 2000]{topsoe2000some}
Topsoe, F. (2000).
\newblock Some inequalities for information divergence and related measures of
  discrimination.
\newblock {\em IEEE Transactions on information theory}, 46(4):1602--1609.

\bibitem[Tschuprow, 1925]{Tschuprow1926grundbegriffe}
Tschuprow, A. (1925).
\newblock {\em Grundbegriffe und grundprobleme der korrelationstheorie}.
\newblock Leipzig: B.G. Teubner.

\bibitem[Tschuprow, 1939]{Tschuprow1939principles}
Tschuprow, A. (1939).
\newblock {\em Principles of the mathematical theory of correlation}.
\newblock W. Hodge \& Co.

\bibitem[Yoshimoto et~al., 2019]{yoshimoto2019quasi}
Yoshimoto, T., Tahata, K., Saigusa, Y., and Tomizawa, S. (2019).
\newblock Quasi point-symmetry models based on f-divergence and decomposition
  of point-symmetry for multi-way contingency tables.
\newblock {\em SUT Journal of Mathematics}, 55(2):109--137.

\end{thebibliography}
\end{document}